\documentclass[11pt]{article}
\usepackage{style}
\usepackage{shortcuts}

\title{Preprocessed 3SUM: From Reals to Integers}
\author{
  Nick Fischer\footnote{Max-Planck Institute for Informatics. \href{mailto:nfischer@mpi-inf.mpg.de}{nfischer@mpi-inf.mpg.de}} \and
  Jonas Schmidt\footnote{Bocconi University. \href{mailto:jonas.schmidt2@phd.unibocconi.it}{jonas.schmidt2@phd.unibocconi.it} }
}
\date{}

\begin{document}

\maketitle

\begin{abstract}
\noindent
The \emph{Preprocessed 3SUM} problem asks to preprocess a set $A$ of $n$ numbers so that we can quickly decide whether a query subset $A' \subseteq A$ contains a solution $a+b=c$. The best known query bounds differ substantially between integer and real inputs. For integers, a series of advances with near-quadratic preprocessing reduced the query time from $\tilde O(n^{13/7})$ [\hyperlink{cite.ChanL15}{Chan, Lewenstein; STOC '15}] to $\tilde O(n^{11/6})$ [\hyperlink{cite.ChanWX23}{Chan, Vassilevska W., Xu; STOC '23}], and finally to $\tilde O(n^{3/2})$ [\hyperlink{cite.KasliwalPS25}{Kasliwal, Polak, Sharma; SOSA '25}]. For reals, only the first of these bounds has been achieved, by adapting the corresponding integer algorithm [\hyperlink{cite.Fischer25}{Fischer; SODA '25}]. Is this gap inherent to the more limited tools available for real inputs?

We give a randomized real-to-integer reduction with $\tilde O(n^2)$ preprocessing overhead and no asymptotic increase in query time. This yields $\tilde O(n^2)$ preprocessing and $\tilde O(n^{3/2})$ query time for real inputs, closing the gap. We show how to extend our result to the setting where one set is unknown at preprocessing time. Our reduction is conceptually simple, and relies on an interesting technical toolkit around old and new tools from sparse linear algebra. More broadly, it suggests a route to real-to-integer reductions for other additive problems.
\end{abstract}

\section{Introduction}
We revisit an ever-present question in algorithm design: \emph{``Are problems harder to solve on real-valued inputs than on integer-valued inputs?''} Integer inputs enable techniques such as scaling, modular hashing, and bit manipulation that have no immediate counterparts for arbitrary reals. Understanding whether these advantages are essential has repeatedly led to new algorithms and, sometimes, to entirely new ways of approaching a problem.

A striking recent example is single-source shortest paths with negative edge weights. In 2022, Bernstein, Nanongkai, and Wulff-Nilsen~\cite{BernsteinNW21} gave a breakthrough near-linear-time algorithm for integer weights. However, the algorithm relies on a scaling technique, which seems very hard (if not impossible) to extend to real weights. This left room for a different line of work, initiated by Fineman's breakthrough in 2024~\cite{Fineman24}, which has now culminated in a very different almost-linear-time algorithm for real weights due to Hair, Li, Li, and Zhang~\cite{HairLLZ26}. The effort to bridge the integer--real divide has thus helped drive one of the major recent successes in graph algorithms.

But there are many more interesting examples. Another one is computing \emph{sumsets}\footnote{I.e., given integer sets $A, B$, the goal is to compute the set $A+B=\{a+b:a\in A,\ b\in B\}$.}: for integer sets, near-linear output-sensitive algorithms have been available since 2002~\cite{ColeH02}, but obtaining an analogous algorithm for real sets took more time and a very different set of techniques~\cite{Fischer25}. And of course this question frequently arises in computational geometry; see e.g.~\cite{ChanP09}.

Perhaps the cleanest approach towards approaching such questions is by attempting \emph{real-to-integer} reductions, i.e., reductions from a problem on real inputs to the same problem on integer inputs. This may appear an unrealistic goal at first, and yet, Chan, Vassilevska~W., and Xu~\cite{ChanWX22} have initiated this trend recently in fine-grained complexity by establishing a non-tight reduction from 3SUM on real inputs to 3SUM on integer inputs. Fischer, Polak, and Schmidt~\cite{FischerPS26} recently established the tight equivalence of the integer and real versions of the 3SUM Hypothesis, through an unexpected application of the Frank--Tardos theorem~\cite{FrankT87}. That result has since been overtaken by a dramatic breakthrough~\cite{AlmanV26}, which refuted the 3SUM Hypothesis for both integer and real inputs. The broader question remains: when can we replace real inputs by integers while preserving precisely the structure that an algorithm needs? Its importance extends beyond a particular hardness hypothesis or a particular problem.

\paragraph{Preprocessed 3SUM}
In this work, we consider a variant of 3SUM that has not been affected as dramatically by the recent breakthrough: \emph{Preprocessed 3SUM}. Given a universe $A$ of $n$ numbers, we preprocess~$A$ so that, upon receiving a subset $A'\subseteq A$, we can quickly decide whether there exist $a,b,c\in A'$ with $a+b=c$. The preprocessing is performed once and can be reused for many queries. Equivalently, one can work with three preprocessed universes $A,B,C$ and queries specifying subsets $A'\subseteq A$, $B'\subseteq B$, and $C'\subseteq C$; we use the single-universe formulation for simplicity. 

This problem has an exciting history which we describe in detail soon.
In a nutshell, the state-of-the-art for integers due to Kasliwal, Polak, and Sharma~\cite{KasliwalPS25} is an algorithm with preprocessing time $\tilde O(n^2)$ and query time $\tilde O(n^{3/2})$.
For real inputs, the best known bound is $\tilde O(n^{13/7})$, due to Fischer~\cite{Fischer25}.
The faster integer algorithm uses an integer-specific form of hashing, which makes it difficult to replicate for real numbers.
In fact, Kasliwal, Polak, and Sharma explicitly ask whether their query time can also be achieved over the reals.\footnote{Specifically, they write: \emph{``It seems to remain a difficult open question whether the subsequent improvement to time $\tilde O(n^{11/6})$~\cite{ChanWX23} or our current improvement to time $O(n^{1.5} \log n)$ also can be mirrored in the real RAM.''}}
We discuss later how the recent improvement for ordinary 3SUM can lead to a slightly subquadratic preprocessing time. Nonetheless, it does not seem to affect any of these query times at all and leaves a substantial polynomial gap.

\paragraph{Main result.}
We give a reduction from real to integer Preprocessed 3SUM that preserves the query time and incurs only a near-quadratic additive cost in preprocessing.

\begin{theorem} \label{thm:reduct-real-to-int}
If Preprocessed 3SUM with integer inputs (over the word RAM model) can be solved in preprocessing time $P(n)$ and query time $Q(n)$, then Preprocessed 3SUM with real inputs (over the real RAM model) can be solved in preprocessing time $\tilde O(P(n) + n^2)$ and query time $O(Q(n))$. 
\end{theorem}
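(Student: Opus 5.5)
Our plan is to replace each real $a\in A$ by an integer $\phi(a)$, computed once during preprocessing, such that for all $a,b,c\in A$ we have $a+b=c$ if and only if $\phi(a)+\phi(b)=\phi(c)$. Each query $A'$ is then answered by the integer data structure on $\phi(A')$, so the query time stays $O(Q(n))$. We will look for $\phi$ as the restriction of a $\mathbb{Q}$-linear map. Linearity gives the forward direction of the equivalence automatically. The real content is to make $\phi$ injective on the finite set of differences $\{a+b-c : a,b,c\in A,\ a+b\neq c\}$, while keeping the integers of polynomial bit length, so that word-RAM operations on them cost $O(1)$.

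\emph{Step 1: record the linear structure.} In $\tilde O(n^2)$ time we enumerate all true triples $a+b=c$; this is standard: sort $A$ and run the two-pointer sweep. Each true triple is a linear relation $e_a+e_b-e_c$ in $\mathbb{Z}^A$. Let $L$ be the lattice these relations span. A naive plan would compute a basis of the full relation space of $A$ over $\mathbb{Q}$, but that is unnecessary. We only need a map that kills every true triple and kills no false triple. So we will look for an integer vector $y\in\mathbb{Z}^A$ with $\phi(a)=y_a$ satisfying:
\begin{enumerate}
\item $y_a+y_b=y_c$ for every true triple;
\item $y_a+y_b\neq y_c$ for every false triple.
\end{enumerate}
Such a $y$ exists, because the real vector $(a)_{a\in A}$ itself satisfies both conditions. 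Hence the solution space $V=\{y : y\perp \text{true triples}\}$ contains a point off every false-triple hyperplane. A random point of $V$ avoids all $O(n^3)$ such hyperplanes with high probability, by Schwartz--Zippel with coefficients drawn from $[\mathrm{poly}(n)]$.

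\emph{Step 2: sample a random integer point of $V$ quickly.} This is the main obstacle. The system defining $V$ has up to $\Theta(n^2)$ sparse equations, each with three nonzeros, in $n$ unknowns. Dense Gaussian elimination would cost $n^\omega$, and worse, would involve large numbers. We will exploit sparsity: the constraint matrix $M$ has $O(n^2)$ nonzeros. We then use Wiedemann/black-box sparse linear algebra, randomized, modulo a random prime $p$ of $\Theta(\log n)$ bits, to obtain a random kernel vector of $M$. A black-box matrix-vector product costs $O(n^2)$, and $O(n)$ iterations would give $n^3$, which is too slow. We would therefore first reduce the number of equations. A random $\tilde O(n)$-sized combination of the rows, or a spanning subset of rows found by a greedy union-find-like process, generates the same row space. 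With $O(n)$ rows, the matrix has $O(n)$ nonzeros, so Wiedemann runs in $\tilde O(n^2)$.

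\emph{Step 3: handle working modulo $p$.} Working modulo $p$ can create extra solutions, since a false triple may satisfy $y_a+y_b\equiv y_c \pmod p$. We handle this by lifting: we represent $\phi(a)$ as the integer in $[0,p)$ and embed the mod-$p$ 3SUM into integer 3SUM through the standard trick. We query the integer structure on the values $y_a$ and $y_a+p$ in the appropriate roles, which preserves the instance size up to a constant. Correctness requires two facts. First, the rank of $M$ mod $p$ equals its rank over $\mathbb{Q}$ with high probability for a random prime, since there are few bad primes dividing a nonzero minor. Second, the random kernel vector avoids every false-triple hyperplane mod $p$. For the latter we take $p\geq n^{4}$ and apply a union bound over the $n^3$ hyperplanes. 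Each hyperplane meets the kernel mod $p$ in a proper subspace, because the corresponding rational statement holds, and proper subspaces carry probability at most $1/p$. Altogether preprocessing is $\tilde O(n^2)$ for the reduction plus $P(O(n))$, and queries cost $O(Q(n))$.
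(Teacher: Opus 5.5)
Your outline matches the paper: list the true triples, take a uniformly random kernel vector of the $3$-sparse system modulo a random prime, and reduce modular 3SUM to two integer instances. However, there is a genuine gap in Step~2, exactly where you locate the main obstacle. You need $O(n)$ \emph{original} (hence $3$-sparse) rows spanning the row space of $M$, found in $\tilde O(n^2)$ time, and none of your suggestions delivers this.
Random linear combinations are not sparse. With a dense combining matrix $R$, the $O(n)$ new rows are dense and cost $\Theta(nm)=\Theta(n^3)$ just to form. With a sparse rank-preserving $R$ (the transpose of \cref{lem:random-sparse-matrix-det}), each new row still combines $\tilde\Theta(m/n)$ original rows. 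So $RM$ can have $\Theta(n^2)$ nonzeros, and Wiedemann's bound $\tilde O((n+m')\,\mathrm{nnz}(RM))$ is again $\tilde O(n^3)$.
A random subset of $\tilde O(n)$ original rows is sparse, but it can miss rare rows needed for the span, e.g.\ the only row touching some coordinate. Union-find decides dependence only for two-term relations $e_i-e_j$ (a graphic matroid); dependencies among rows $e_a+e_b-e_c$ are not connectivity questions. The plain greedy scan with an incrementally maintained reduced echelon basis spends $\Theta(n)$ per tested row, i.e.\ $\Theta(n^3)$ overall, and known rank/basis algorithms cost $n^{\omega}$. So the $\tilde O(n^2)$ preprocessing bound is asserted, not proved, and this step is the paper's main technical contribution.

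The missing idea is the approximate-basis routine of \cref{lem:sparse-approx-basis}. Sample a random set $T$ of $4n$ rows and build a dependence tester for them (\cref{lem:lin-dep-testing}). To do so, multiply by a sparse random $n\times(r+1)$ matrix $H$, with $r$ the rank of $T$. By \cref{lem:random-sparse-matrix-det} this preserves the relevant ranks and leaves a one-dimensional kernel $\langle x\rangle$. One Wiedemann solve finds $x$ in $\tilde O(n^2)$ time, and then each remaining row $v$ is tested in $O(\log m)$ time by checking whether $vH$ is orthogonal to $x$. Delete every row dependent on $T$ and recurse.
The win/win analysis goes as follows. While more than half of the rows are independent of $T$, each new sample raises the rank with probability at least $1/2$, which can happen at most $n$ times. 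Hence with probability at least $1/2$ this phase ends within the $4n$ samples, and from then on at least $m/4$ rows are deleted. So $O(\log m)$ rounds leave at most $16n$ rows, and your Step~2 then works.
A minor point: your mod-$p$ correctness argument via nonzero $(r+1)$-minors of $M$ augmented by a false-triple row is a valid, more elementary replacement for the paper's Frank--Tardos argument. But Hadamard's inequality only bounds those minors by $2^{O(n)}$, so $p\approx n^4$ leaves failure probability $O(n^4/p)=O(1)$ over the $n^3$ triples. Take $p$ polynomially larger, e.g.\ $n^{10}$ as in the paper.
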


Up to the near-quadratic overhead in preprocessing, this is exactly the kind of real-to-integer reduction that we are generally aiming for. Our reduction takes a surprisingly clean form, where we actually replace each real by an appropriate integer and then run the integer algorithm as a black box. It is conceptually much simpler than the recent reduction of Fischer, Polak, and Schmidt~\cite{FischerPS26}, and does not invoke overly complicated machinery like the Frank--Tardos algorithm. While the approach is pleasingly simple on a conceptual level, some interesting technical aspects remain: the algorithm relies on a combination of sparse linear algebra tools, some of which were known~\cite{Wiedemann86,KaltofenS91}, and some of which we develop here.

Applying our reduction to the integer algorithm of Kasliwal, Polak, and Sharma~\cite{KasliwalPS25} answers their question and gives the following bound.

\begin{corollary} \label{cor:real-1.5}
Preprocessed 3SUM with real inputs (over the real RAM model) can be solved in preprocessing time $\tilde O(n^2)$ and query time $\tilde O(n^{3/2})$.
\end{corollary}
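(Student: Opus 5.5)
The corollary follows by instantiating Theorem~\ref{thm:reduct-real-to-int} with the integer algorithm of Kasliwal, Polak, and Sharma~\cite{KasliwalPS25}. That algorithm, on the word RAM, has preprocessing time $P(n) = \tilde O(n^2)$ and query time $Q(n) = O(n^{3/2}\log n) = \tilde O(n^{3/2})$. Substituting these into the theorem gives real preprocessing time $\tilde O(P(n) + n^2) = \tilde O(n^2)$ and query time $O(Q(n)) = \tilde O(n^{3/2})$, which is exactly the claim.

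Two points need checking, and the plan is to verify them in this order.

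\emph{Model compatibility.} The integers produced by the reduction in Theorem~\ref{thm:reduct-real-to-int} must fit the word size the integer algorithm assumes. Equivalently, the algorithm of~\cite{KasliwalPS25} must tolerate inputs of polynomially bounded bit length, say $O(\log n)$ or $\mathrm{poly}\log n$ bits. I expect the reduction to output integers of magnitude $n^{O(1)}$ by design, for instance after random hashing modulo a prime. The integer algorithm works for such universes on a word RAM with $\Theta(\log n)$-bit words, and if needed its input can first be reduced modulo a random prime of size $\mathrm{poly}(n)$, losing only polylogarithmic factors.

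\emph{Randomization.} The reduction is randomized and the integer algorithm may be too. I would fix the success probability of each to $1 - 1/\mathrm{poly}(n)$, or take a union bound over the queries. If necessary, I would boost by running $O(\log n)$ independent copies, which keeps the bounds within the $\tilde O$ factors.

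Neither step should be a genuine obstacle. The only subtle point is the word-size bookkeeping in the first step; everything else is direct substitution.
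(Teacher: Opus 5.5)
Your proposal is correct and matches the paper, which gets the corollary exactly by plugging the Kasliwal--Polak--Sharma bounds $P(n)=\tilde O(n^2)$ and $Q(n)=\tilde O(n^{3/2})$ into Theorem~\ref{thm:reduct-real-to-int}. Your word-size check also agrees with the paper's construction: the reduction already outputs elements of $\Field_p$ with $p \in [n^{10}, 2n^{10}]$ and handles them as two integer instances over a polynomially bounded universe, so the extra hashing and boosting you mention are unnecessary, though harmless.
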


This improves the query exponent for real inputs from $13/7$ to $3/2$, matching the best known integer bound with near-quadratic preprocessing. More broadly, and perhaps even more importantly, our method suggests a route to real-to-integer reductions for other additive problems.

\begin{table}[tbp]
    \centering
    \renewcommand{\arraystretch}{1.25}
    \setlength{\tabcolsep}{4pt}
    \begin{tabular*}{\linewidth}{@{\extracolsep{\fill}}lccc@{}}
        \toprule
        \textbf{Source} & \textbf{Preprocessing time} & \textbf{Query time} & \textbf{For reals?} \\
        \midrule
        Chan--Lewenstein~\cite{ChanL15}
            & $\widetilde O(n^2)$ & $\widetilde O(n^{13/7})$ & $\times$ \\
        Chan--Vassilevska W.--Xu~\cite{ChanWX23}
            & $\widetilde O(n^2)$ & $\widetilde O(n^{11/6})$ & $\times$ \\
        Fischer~\cite{Fischer25}
            & $\widetilde O(n^2)$ & $\widetilde O(n^{13/7})$ & $\checkmark$ \\
        Kasliwal--Polak--Sharma~\cite{KasliwalPS25}
            & $\widetilde O(n^2)$ & $\widetilde O(n^{3/2})$ & $\times$ \\
        \midrule
        {\bfseries\cref{cor:real-1.5}}
            & $\widetilde O(n^2)$ & $\widetilde O(n^{3/2})$ & $\checkmark$ \\
        {\cref{thm:preprocessed-subquadratic}}
            & $O(n^{2-\epsilon})$ & $\widetilde O(n^{3/2})$ & $\times$ \\
        \bottomrule
    \end{tabular*}
    \caption{Known algorithms for Preprocessed 3SUM. Here, $\epsilon>0$ is an absolute constant. The table omits recent improvements regarding space complexity~\cite{KirkpatrickKMW26,CarmelKK26}.}
    \label{tab:preprocessed-3sum}
\end{table}

\paragraph{History and the effect of the 3SUM breakthrough.}
Preprocessed 3SUM has a rich history. The problem was first raised in the work of Bansal and Williams~\cite{BansalW12}, who credited Avrim Blum for suggesting this application of preprocessing. Chan and Lewenstein~\cite{ChanL15} gave the first algorithm with truly subquadratic query time $\tilde O(n^{13/7})$ and near-quadratic preprocessing time for integer inputs. Their approach is based on a pioneering application of additive combinatorics. The query time was improved to $\tilde O(n^{11/6})$ by Chan, Vassilevska W., and Xu~\cite{ChanWX23}, and then to $\tilde O(n^{3/2})$ by Kasliwal, Polak, and Sharma~\cite{KasliwalPS25}. None of these results apply to real inputs out of the box; Fischer~\cite{Fischer25} later extended the Chan--Lewenstein algorithm to real inputs substituting the integer-specific sparse convolution routine with an analogue for reals. \Cref{tab:preprocessed-3sum} summarizes the relevant bounds.
More recently, Kirkpatrick, Kuszmaul, Mathialagan, and Vassilevska W.~\cite{KirkpatrickKMW26}, and Carmel, Kosoburd, and Krauthgamer~\cite{CarmelKK26}, studied space-efficient data structures for a variant of Preprocessed 3SUM where the universe set is partially unknown. In this variant, only two sets $A, B$ are preprocessed, and each query is to decide whether $A' \subseteq A, B' \subseteq B$ and $C' \subseteq \Int$ contain a 3SUM solution.
A related line of work studies a different setting of 3SUM that allows for preprocessing, namely \emph{3SUM-Indexing}~\cite{GoldsteinKLP17,KopelowitzP19,GolovnevGHPV20,ChungL23,DinurG26}, with the goal to find the best tradeoff between space and query time for a data structure that preprocesses integer sets $A, B$ and answers queries of the form: given $c \in \Int$, is there $(a,b) \in A \times B$ such that $a + b = c$?

The new algorithm for ordinary 3SUM also has a consequence for Preprocessed 3SUM, although obtaining it requires more than invoking that algorithm at query time. In the appendix, we show how to combine the breakthrough~\cite{AlmanV26} with the existing preprocessing techniques to obtain an integer algorithm with preprocessing time $O(n^{2-\epsilon})$ and query time $\tilde O(n^{1.5})$, for some absolute constant $\epsilon>0$; see \cref{thm:preprocessed-subquadratic}. The saving in the preprocessing exponent is small, but establishes that truly subquadratic preprocessing is compatible with the best known query bound.

\paragraph{Other settings.}
Algorithms for preprocessed 3SUM have been successfully extended to several other directions:
(i) derandomizing the algorithms, (ii) balancing query time and space usage, and (iii) allowing the third query set to be unknown at preprocessing time.
Perhaps the technically most interesting extension for our algorithm is the third setting, where sets $A$ and $B$ are given at preprocessing time and queries consist of $A' \subseteq A$, $B' \subseteq B$, and an arbitrary third set $C'$. Algorithms must detect a solution $(a,b,c) \in A' \times B' \times C'$ to $a + b = c$.
We resolve this setting with a similarly clean reduction as \Cref{thm:reduct-real-to-int}, leading to the following corollary. See \Cref{sec:unknownC} for how to adapt our algorithm to also solve this setting.

\begin{restatable*}{corollary}{corunknownC}\label{cor:unknown-c}
    Preprocessed 3SUM with Unknown $C$ with real inputs (over the real RAM model) can be solved in preprocessing time $\tilde O(n^2)$ and query time $\tilde O(n^{3/2})$.
\end{restatable*}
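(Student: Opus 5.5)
We prove \cref{cor:unknown-c} by the same template as \cref{cor:real-1.5}: first give a real-to-integer reduction for the Unknown-$C$ setting, analogous to \cref{thm:reduct-real-to-int}, and then plug in an integer algorithm for Preprocessed 3SUM with Unknown $C$ that has $\tilde O(n^2)$ preprocessing and $\tilde O(n^{3/2})$ query time.

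\emph{Integer algorithm.} For the integer side we would check that the algorithm of Kasliwal, Polak, and Sharma~\cite{KasliwalPS25} (or its Chan--Lewenstein-style precursors~\cite{ChanL15,ChanWX23}) only needs to preprocess $A$ and $B$. The role of $C$ appears to be limited to a lookup or test of the query set $C'$ against the preprocessed sumset structure of $A\times B$, which can be done at query time in $\tilde O(n^{3/2})$. If this needs care, we would cite the unknown-$C$ variants studied in~\cite{KirkpatrickKMW26,CarmelKK26}.

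\emph{Reduction.} The reduction keeps the shape of the proof of \cref{thm:reduct-real-to-int}. During preprocessing we map each $a\in A$ and $b\in B$ to an integer $\phi(a)$, $\phi(b)$. The map should come from a random integer-valued linear functional on the $\mathbb{Q}$-linear span of the relevant reals, restricted to the linear relations actually realized by sums $a+b$. The requirements are:
\begin{itemize}
\item $a+b=a'+b'$ holds if and only if $\phi(a)+\phi(b)=\phi(a')+\phi(b')$;
\item the integers are polynomially bounded, so that they fit in words.
\end{itemize}
These are computed with the sparse linear algebra tools~\cite{Wiedemann86,KaltofenS91} within $\tilde O(n^2)$ time.

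The new difficulty is that a query element $c\in C'$ is an arbitrary real, so it has no preassigned integer image. We handle it as follows. In preprocessing we also build a dictionary over the at most $n^2$ distinct sum values $a+b$ with $a \in A$, $b \in B$, storing each value $s$ together with its canonical integer image $\phi(s)$. This takes $\tilde O(n^2)$ time via sorting or hashing on the real RAM; since the sums are reals, we use a balanced search tree with $O(\log n)$ lookups. At query time, each $c\in C'$ is looked up:
\begin{itemize}
\item if $c$ equals some sum $s$, it is replaced by $\phi(s)$;
\item otherwise $c$ cannot participate in any solution, so it is discarded (or mapped to a sentinel integer that provably avoids all $\phi(a)+\phi(b)$, e.g.\ a value larger than all of them).
\end{itemize}
This costs $O(n\log n)$ per query, which is dominated by $\tilde O(n^{3/2})$. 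Correctness follows because $\phi(a)+\phi(b)=\phi(c)$ holds exactly when $a+b=c$.

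\emph{Main obstacle.} The delicate step is making $\phi$ consistent on all $n^2$ sums simultaneously, rather than only on $A$ as in the single-universe setting. In \cref{thm:reduct-real-to-int}, preservation of the relations $a+b=c$ inside $A$ is what matters. Here we must preserve the full equality structure of $A+B$, including the absence of collisions: we need $\phi(a)+\phi(b)\neq\phi(a')+\phi(b')$ whenever $a+b\neq a'+b'$. Our plan is to choose $\phi$ with random coefficients from a polynomially large range and apply Schwartz--Zippel to the $O(n^4)$ nonzero linear forms $a+b-a'-b'$. The integer bit-length then stays $O(\log n)$, and the $\tilde O(n^2)$ preprocessing bound carries over from \cref{thm:reduct-real-to-int}. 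Combining this reduction with the integer algorithm gives \cref{cor:unknown-c}.
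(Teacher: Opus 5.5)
Your outer template matches the paper's. You give a real-to-integer reduction for the Unknown-$C$ setting, keep a sorted structure over the $n^2$ pair sums so that each query element $c$ is translated into the image of a matching pair (or discarded), and then call the integer algorithm of~\cite{KasliwalPS25}. The gap is in the step you treat as routine, namely that ``the $\tilde O(n^2)$ preprocessing bound carries over from \cref{thm:reduct-real-to-int}''.

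In that theorem, the relation matrix has at most $n^2$ rows because a pair $(i,j)$ determines $k$. Here the equalities to be preserved are four-term relations $a_i+a_j=a_k+a_\ell$, and there can be $\Theta(n^3)$ of them, already for $A=B=\{1,\dots,n\}$. So writing down ``the linear relations actually realized by sums'' costs $\Theta(n^3)$ time, and \cref{thm:sparse-rect-eq-sys} with $m=\Theta(n^3)$ gives only $\tilde O(n^3)$.

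The missing idea is to sparsify the system without changing its kernel. Sort all pairs by their sum, and add the row $e_i+e_j-e_k-e_\ell$ only for \emph{consecutive} pairs in this order whose sums are equal. Then each pair lies in at most two rows, so $m\le n^2-1$ and $t=4$. Since pairs with equal sum form a contiguous block, every kernel element satisfies all equalities $x_i+x_j=x_k+x_\ell$ by transitivity. Your ``main obstacle'' paragraph instead places the difficulty in avoiding collisions, which is the part that goes through essentially unchanged from the single-set case.

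Your collision argument is also incomplete as written. A random integer functional on the rational span of the inputs is not computable on the real RAM. Random integer combinations of an integer kernel basis need not have $O(\log n)$ bits, since an exact encoding can require $\Omega(n)$-bit entries (e.g.\ for $\{0,1,2,4,\dots,2^n\}$).

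To get small numbers you must sample the kernel over $\Field_p$, and then split the $\Field_p$ instance into two integer instances to handle wrap-around. Over $\Field_p$, however, Schwartz--Zippel needs the form $x_i+x_j-x_k-x_\ell$ not to vanish identically on $\ker M$. The fact that $a_i+a_j-a_k-a_\ell\neq0$ in $\Real$ does not give this by itself, because modulo a bad prime the form can fall into the row space of $M$.

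The paper closes this with Frank--Tardos and the random choice of $p\in[n^{10},2n^{10}]$. This yields $v\in\ker M$ with $v_i+v_j-v_k-v_\ell\neq0$ in $\Field_p$. Writing a uniform kernel element as $y+\alpha v$ then makes the form uniform on $\Field_p$, and a union bound over the $n^4$ quadruples finishes the argument.
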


\paragraph{The real RAM model.}
The real RAM is a convenient model for algorithms on arbitrary real inputs, but its unrestricted form can be unrealistically powerful. In particular, unit-cost arithmetic with unrestricted multiplication can conceal computations on numbers of enormous bit complexity. See, e.g.,\ the discussion in Fischer, Polak, and Schmidt~\cite{FischerPS26}. Notably, our reduction works in a much more restricted model which only supports queries of the form ``$a \leq b$?'' and ``$a + b \leq c$?'' (and reverse inequalities) for input reals $a, b, c$. In particular, it does not require storing results of arithmetic operations. In this restricted model, $\Omega(n^2)$ preprocessing time is \emph{unconditionally} necessary to support truly subquadratic query time~\cite{Erickson95,AilonC05}, making our preprocessing bound \emph{optimal} up to polylogarithmic factors.

\section{Preliminaries}

We formally define the problem.

\begin{definition}[Preprocessed 3SUM]\label{def:preproc-3sum}
Preprocess a given set of numbers $A$ of size $n$ to support the following queries: Given $A' \subseteq A$, decide if $A'$ contains a solution $a + b = c$.
\end{definition}

This definition slightly differs from the usual, \emph{tripartite} definition of preprocessed 3SUM, where we preprocess three sets $A$, $B$, and $C$ to support queries of the form: Given $A' \subseteq A$, $B' \subseteq B$, and $C' \subseteq C$, decide if there is a solution $(a,b,c) \in A' \times B' \times C'$ to $a + b = c$. 
It is well known that for ordinary 3SUM, the tripartite version and the single-set version are equivalent in running time up to logarithmic factors via simple reductions (see, e.g.,~\cite{DudekGS20}).
These reductions immediately carry over to the preprocessed setting.
Therefore, we can choose to work with \Cref{def:preproc-3sum}, which is more convenient for us.

We consider Preprocessed 3SUM with either real or integer inputs.
Unless specified otherwise, we assume that integer inputs are bounded by $\set{-n^{O(1)}, \dots, n^{O(1)}}$.
All randomized algorithms are Monte Carlo randomized. We sometimes write \emph{with high probability} to say that an event happens with probability at least $1 - n^{-c}$ for some sufficiently large constant $c$.

Let $A \in \Field_p^{m \times n}$ be a matrix over the finite field $\Field_p$ for a prime $p$. We denote by $\nnz(A) \coloneqq \abs{\set{(i,j)\in[m]\times[n]:A[i,j] \ne 0}}$ the number of nonzero entries of $A$. Similarly, for a vector $v \in \Field_p^n$, the number of nonzero entries in $v$ is denoted by $\nnz(v)$.
We will equivalently write that $v$ is \emph{$t$-sparse} if $\nnz(v) \le t$.

\section{Real-to-Integer Reduction}
We start with a high-level description of our reduction. Our starting point is the same as that of Fischer, Polak, and Schmidt~\cite{FischerPS26}. In their work, they design a real-to-integer reduction for 3SUM. That is, they show that if integer 3SUM can be solved in time $O(n^{2-\epsilon})$ then real 3SUM can be solved in time $O(n^{2-\epsilon'})$. The dependence of $\epsilon'$ on $\epsilon$ is quite bad, and there is no hope that their approach (as is) can be used to lead to a lossless reduction. (And again, the result has since been subsumed by~\cite{AlmanV26}.) Nevertheless, the same basic observations are useful. Their result is based on the following theorem due to Frank and Tardos~\cite{FrankT87} (see also~\cite{Eisenbrand10}):

\begin{theorem}[Frank--Tardos]\label{thm:ft}
Fix any integer $c \geq 1$. There is a deterministic algorithm that, given reals $a = (a_1, \dots, a_n) \in \Real^n$, computes integers $v = (v_1, \dots, v_n) \in \Int^n$ satisfying the following two properties:
\begin{itemize}
    \item $a^T x \geq 0$ if and only if $v^T x \geq 0$, for all $x \in \set{-c, \dots, c}^n$,
    \item $\max_i |v_i| \leq \exp(O(n^{3}))$.
\end{itemize}
The algorithm runs in polynomial time $n^{O(1)}$.  
\end{theorem}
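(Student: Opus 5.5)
This is the classical Frank--Tardos simultaneous Diophantine approximation argument; my plan is to reconstruct it from lattice basis reduction. Throughout, the goal is an integer vector $v$ that realizes the same sign pattern on the box $\set{-c,\dots,c}^n$ as $a$. Set $N \coloneqq (n+1)c + 1$, a bound exceeding $\|x\|_1 + 1$ for every relevant $x$ with room to spare.

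\emph{Step 1: one approximation round.} Given a nonzero real vector $w$, normalize it so that $\|w\|_\infty = 1$. Apply simultaneous Diophantine approximation (LLL, in the real RAM with comparisons and arithmetic) to find an integer $q$ with $1 \le q \le 2^{O(n^2)} N^{n}$ and an integer vector $u$ with $\|q w - u\|_\infty < 1/N$. This $u$ satisfies $\|u\|_\infty \le q + 1 = 2^{O(n^2)}$. It also preserves signs: for any $x$ in the box, if $q w^T x \ge 1$ in absolute value, then $|u^T x - q w^T x| < \|x\|_1/N < 1$, so the sign of $u^T x$ agrees with that of $w^T x$.

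\emph{Step 2: iteration.} The difficulty in Step 1 is that $w^T x$ may be nonzero but tiny, or exactly zero while $u^T x \ne 0$. To handle this, set $u^{(1)} \coloneqq u$ and recurse on the residual $w^{(2)} \coloneqq q w - u^{(1)}$, normalized. The key fact to prove is that the residual is ``orthogonal-ish'' to $u^{(1)}$ on the box. By induction, $w = \sum_i \lambda_i u^{(i)}$ with rapidly decreasing positive scalars $\lambda_1 \gg \lambda_2 \gg \cdots$, where each ratio satisfies $\lambda_{i+1}/\lambda_i \le 1/(qN)$. The lexicographic sign of the sequence $(u^{(i)T}x)_i$ then equals the sign of $w^T x$: the first nonzero term dominates all later terms, because $|u^{(i)T} x| \le n c\, 2^{O(n^2)}$ while the $\lambda$ ratio is much smaller. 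Each round strictly increases the dimension of the span of the $u^{(i)}$ (the residual is a new direction), so the recursion terminates after at most $n$ rounds. This yields $w = \sum_{i\le n} \lambda_i u^{(i)}$ exactly.

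\emph{Step 3: combination.} Let $M \coloneqq 2 n c \cdot 2^{O(n^2)}$ exceed every $|u^{(i)T}x|$, and set $v \coloneqq \sum_i M^{n-i} u^{(i)}$. This is an integer vector with $\|v\|_\infty \le n M^n = \exp(O(n^3))$. The sign of $v^T x$ equals the lexicographic sign of $(u^{(i)T}x)_i$, which by Step 2 equals the sign of $a^T x$. The running time is $n$ LLL calls, hence polynomial.

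\emph{Main obstacle.} I expect the hardest step to be the exactness claim in Step 2: showing that the residual decomposition terminates within $n$ rounds and that the domination inequality holds uniformly over all $x$ in the box. The crucial point is that the approximation error at each level must be below $1/N$ relative to that level's scale, so that a zero first coordinate genuinely forces the next level to decide the sign. If this bookkeeping becomes delicate, I would instead cite Frank and Tardos~\cite{FrankT87} directly, as the paper does.
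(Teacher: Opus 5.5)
Your route, iterated simultaneous Diophantine approximation and then packing the levels into $v=\sum_i M^{n-i}u^{(i)}$, is the standard Frank--Tardos argument. The paper does not reprove the theorem; it only cites it. Your final combination step and the $\exp(O(n^3))$ size bookkeeping are fine. However, the two steps that carry the actual content are not correctly justified.

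\textbf{Domination.} The reason you give in Step~2 does not work. You cannot play the uniform bound $|u^{(i)T}x| \le nc\,2^{O(n^2)}$ against the ratio $\lambda_{i+1}/\lambda_i < 1/(q_{i+1}N)$. If $q_{i+1}=1$ the ratio is only about $1/N$, and a large $q_{i+1}$ comes with an equally large $u^{(i+1)}$. The ratio must be paired with the size of the \emph{same} level. By integrality $\|u^{(j)}\|_\infty \le q_j$, so $\lambda_j|u^{(j)T}x| < \lambda_{j-1}\|x\|_1/N$. The tail after the first nonzero level $k$ then sums to less than $\lambda_k\|x\|_1/(N-1) < \lambda_k$. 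With your bound $q_j+1$ the estimate loses a factor $2$, which $N=(n+1)c+1$ does not absorb.

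A cleaner fix avoids termwise bounds altogether. By construction the tail is exactly $\sum_{j>k}\lambda_j u^{(j)} = \lambda_k r^{(k)}$ with $r^{(k)} = q_k w^{(k)} - u^{(k)}$. Hence $q_k\, w^{(k)T}x = u^{(k)T}x + r^{(k)T}x$ with $|r^{(k)T}x| \le \|x\|_1/N < 1$. A nonzero integer $u^{(k)T}x$ therefore fixes the sign. If $u^{(k)T}x = 0$, then $w^{(k)T}x$ is a positive multiple of $w^{(k+1)T}x$. Induction gives the lexicographic rule, including that $a^Tx=0$ forces every $u^{(i)T}x=0$.

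\textbf{Termination.} ``The residual is a new direction'' is an unproven assertion. Even if the residual lies outside $\mathrm{span}(u^{(1)},\dots,u^{(k)})$, its integer approximation $u^{(k+1)}$ could a priori lie inside that span. The missing idea is what your normalization $\|w^{(k)}\|_\infty = 1$ buys. If $|w^{(k)}_j| = 1$, then $q_k w^{(k)}_j$ is an integer, so $|q_k w^{(k)}_j - u^{(k)}_j| < 1/N$ forces $u^{(k)}_j = q_k w^{(k)}_j$ and hence $r^{(k)}_j = 0$. If $w^{(k)}_j = 0$, then likewise $u^{(k)}_j = r^{(k)}_j = 0$. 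So the support of the residual strictly shrinks, which gives:
\begin{itemize}
\item at most $n$ rounds;
\item a last residual exactly $0$, hence the exact decomposition $w=\sum_i\lambda_i u^{(i)}$;
\item $u^{(i)}$ in triangular form, hence linearly independent.
\end{itemize}

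A smaller point: LLL's usual polynomial bound assumes an integral lattice. For real $w$ you need either to round $w$ first to a fine rational grid, or to bound the number of swaps via a lower bound on the shortest vector of the Diophantine lattice. In both cases you also need a floor operation, or its simulation by comparisons.
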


In words, this theorem allows to replace arbitrary reals by integers such that the order of all subset sums is preserved. Moreover, the integers are reasonably small with bit representations of polynomial length $O(n^3)$. This should be viewed as a ``big hammer''. It can be directly applied to 3SUM to yield the intended real-to-integer reduction, albeit with a (large) polynomial overhead in the preprocessing time. 

Specifically, apply the Frank--Tardos theorem to the given set of reals $A = \set{a_1, \dots, a_n}$. Then substitute each real $a_i$ by the corresponding integer $v_i$. This preserves exactly all 3SUM solutions and non-solutions. If $a_i + a_j = a_k$ then also $v_i + v_j = v_k$ (by applying the first condition two times, for the vectors $x = e_i + e_j - e_k$ and $x = -e_i - e_j + e_k$).

The only remaining problem is that the resulting integers are very large; however, this can be solved by a standard trick. We can hash all integers $v_1, \dots, v_n$ to the range $\set{-n^{O(1)}, \dots, n^{O(1)}}$ in such a way that 3SUM solutions remain preserved. A simple way is to sample a random prime~$p$ of size $n^{O(1)}$ and to reduce all numbers modulo $p$. In this way, 3SUM solutions are necessarily preserved, while non-solutions stay non-solutions with good probability.

Except for this last technical issue, the approach can be summarized by the following observation:

\begin{observation}\label{obs:ai}
There is a randomized polynomial-time algorithm which, given a set $A$ of $n$ reals, computes an \emph{integer encoding} $f : A \to \Int$, i.e., a function satisfying, for all $a, b, c \in A$, that $a + b = c$ if and only if $f(a) + f(b) = f(c)$.
\end{observation}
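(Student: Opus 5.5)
We apply the Frank--Tardos theorem (\Cref{thm:ft}) with $c = 1$ to the vector $a = (a_1, \dots, a_n)$ listing the elements of $A$. This yields integers $v_1, \dots, v_n$, and we set $f(a_i) \coloneqq v_i$. The algorithm is deterministic and runs in polynomial time. Each $v_i$ has $O(n^3)$ bits, so $f$ is a valid integer-valued function that can be written down in polynomial time. The observation asks only for $f : A \to \Int$, with no bound on its range, so no hashing is needed for the statement as worded.

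Next we verify the encoding property. Fix $a_i, a_j, a_k \in A$, allowing repetitions. Consider the integer vector $x = e_i + e_j - e_k$, where coinciding indices add up; with repetitions it may have entries in $\set{-1,0,1,2}$. To keep $x \in \set{-c,\dots,c}^n$ in every case, we apply \Cref{thm:ft} with $c = 3$ instead; $c$ is a fixed constant, so this is allowed. Then $a_i + a_j = a_k$ holds iff $a^T x \ge 0$ and $a^T(-x) \ge 0$. Both $x$ and $-x$ lie in $\set{-3,\dots,3}^n$, so by the first property of \Cref{thm:ft} this is equivalent to $v^T x \ge 0$ and $v^T(-x) \ge 0$, that is, to $v_i + v_j = v_k$, which is $f(a_i) + f(a_j) = f(a_k)$. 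Since the argument is a chain of equivalences, it gives both directions of the ``if and only if''.

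The statement calls the algorithm randomized; the deterministic algorithm above is a special case. If one also wants the polynomially bounded range used later, one can compose $f$ with reduction modulo a random prime $p$ of size $n^{O(1)}$. This is only an optional strengthening and not part of the claim. Under that composition, the forward direction is automatic. For the backward direction, a nonzero value $v_i + v_j - v_k$ of magnitude $\exp(O(n^3))$ has at most $O(n^3)$ prime divisors, so a random prime from a set of $n^{\Theta(1)}$ large primes divides none of the $n^3$ such differences with high probability, by a union bound.

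There is no real obstacle here. The one point needing care is handling triples with repeated elements, which is covered by choosing $c = 3$ (or $c = 2$ suffices).
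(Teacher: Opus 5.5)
Your proof is correct and matches the paper's argument: apply the Frank--Tardos theorem to $(a_1,\dots,a_n)$, test the vectors $\pm(e_i+e_j-e_k)$, and optionally reduce modulo a random prime to make the integers small; your choice of $c \ge 2$ to cover repeated indices fills in a detail the paper leaves implicit. One small caveat on the optional hashing remark: after reduction modulo $p$, the forward direction holds only as a congruence, not as an equality of integer representatives, since the sum of two representatives can land in $\{p,\dots,2p-2\}$ (the paper handles this wraparound by splitting into two integer instances).
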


However, applying the Frank--Tardos theorem appears to be a big hammer for a much smaller task. It not only preserves 3SUMs, but even Subset Sums and even the \emph{order} of Subset Sums.

It turns out that this observation has a simpler alternative proof. (This alternative proof is exactly what was pointed out by an LLM at an early stage of this project, see \hyperref[sec:ai]{AI Disclosure}.) Write \makebox{$A = \set{a_1, \dots, a_n}$}. Consider the matrix $M$, where we insert the row $e_i + e_j - e_k$ if and only if $a_i + a_j = a_k$ is a 3SUM solution. Note that every integer encoding $f$ of $A$ forms a solution to the equation system $M x = 0$ (when taking $x_i = f(a_i)$). In this spirit, the idea is to compute a \emph{random}\footnote{Of course, it is not well-defined here what a random integer solution means. We will later make this rigorous by working over a finite field $\Field_p$ instead.} integer solution to the equation $M x = 0$ (i.e., a random kernel element of $M$) and to take $f(a_i) \coloneqq x_i$.

Clearly, the resulting function $f$ preserves all 3SUM solutions. More interestingly, focus on any non-solution $a_i + a_j \neq a_k$. The constraint $x_i + x_j = x_k$ is not implied by the equation system. Thus, the set of solutions $x$ satisfying that extra constraint is of dimension one smaller than the dimension of the kernel. For this reason, the probability of sampling one such solution is small, and with high probability the resulting function $f$ satisfies $f(a_i) + f(a_j) \neq f(a_k)$.

The problem, of course, is to implement this alternative proof more economically than the large polynomial time of the naive algorithm. Currently the running time suffers from large polynomial factors for three different reasons:
\begin{enumerate}
    \item Listing the up to $n^2$ 3SUM solutions in time $O(n^2)$.
    \item Solving an equation system of size up to $n^2 \times n$. This takes time $O(n^4)$ naively, and fast matrix multiplication allows some improvements to time $O(n^3)$ at best. To achieve time $\tilde O(n^2)$ we have to exploit the \emph{sparsity} of the system.
    \item Any nonzero integer solution to this equation system must have polynomially many bits in the worst case. E.g., consider the 3SUM instance $\set{1, 2, 4, 8, \dots, 2^n}$; this set cannot be replaced with integers smaller than $2^n$ without introducing or destroying some solutions $a + b = c$. Thus, each basic computation with a resulting integer already requires time $\Omega(n)$.
\end{enumerate}

While 1 is acceptable (and unavoidable in the restricted real RAM model), 2 and 3 are real issues. We will deal with 2 by deriving a new algorithm for solving rectangular, row-sparse equation systems; see the following \cref{thm:sparse-rect-eq-sys}. This algorithm is likely to be of independent interest. To deal with~3, as before, we instead perform all computations over sufficiently large finite fields.

\begin{restatable*}[Sparse Rectangular Linear Equation Solving]{theorem}{thmsparserecteqsys} \label{thm:sparse-rect-eq-sys}
There is a randomized algorithm that, given a matrix \smash{$A \in \Field_p^{m \times n}$} and a vector \smash{$b \in \Field_p^m$}, returns a uniformly random solution $x$ to $A x = b$ (or reports that no solution exists). The algorithm runs in time \smash{$\tilde O((m + n^2) t)$}, provided that each row in $A$ is $t$-sparse and that $p > m^4$.
\end{restatable*}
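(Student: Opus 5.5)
We reduce the rectangular system to a square, nonsingular system of dimension at most $n$, which can then be handled by Wiedemann-style black-box methods~\cite{Wiedemann86,KaltofenS91}. A single matrix-vector product with $A$ or $A^T$ costs $O(m t)$. If we applied Wiedemann directly to $A^T A$, we would need $O(n)$ such products, for a total of $O(nmt)$, which is too slow when $m \gg n$. The bound $\tilde O((m+n^2)t)$ suggests a different plan. We spend a constant number of passes over the $m$ rows, giving $O(mt)$, to compress the system to $O(n)$ rows, each of which is still $O(t)$-sparse. Wiedemann on an $n$-column system with $O(n)$ sparse rows then costs $\tilde O(n \cdot nt)$.

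\textbf{Step 1 (row compression).} We first find a set $R$ of at most $n$ rows spanning the row space of $A$. We pick a random sparse combination: hash each row to one of $O(n)$ buckets and let the compressed matrix $SA$ have, for each bucket, the sum of that bucket's rows with random coefficients from $\Field_p$. This preserves the row space only in expectation-like senses, and it destroys sparsity unless the buckets are small. We therefore choose sparsity differently. We greedily identify an independent set of rows, and test whether a row lies in the current span using a maintained randomized structure. Concretely, we pick a random vector $y$ and maintain rank information through random projections $A\Lambda$, where $\Lambda \in \Field_p^{n \times k}$ is random. This is the step I expect to be the main obstacle. It needs a selection of $\le n$ rows spanning the row space in time $\tilde O((m+n^2)t)$, which plausibly comes from a Kaltofen--Saunders style preconditioning combined with a recursive halving over the rows. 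Each halving step merges two spanning sets of size $\le n$ using $\tilde O(n^2 t)$ Wiedemann rank computations, and the recursion is organized so that the total cost is $\tilde O(mt + n^2 t)$ up to logs. The condition $p > m^4$ lets us union-bound Schwartz--Zippel failures over all $\mathrm{poly}(m)$ events.

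\textbf{Step 2 (consistency).} Once we have $R$ with $A_R$ of full row rank $r$, the systems $Ax=b$ and $A_R x = b_R$ have the same solution set, provided the system is consistent. To check consistency, we take a random $y$, solve $A_R x_0 = b_R$ as in Step 3, and verify $A x_0 = b$ in time $O(mt)$. If $Ax=b$ is consistent, any $x_0$ solving $A_R x_0 = b_R$ also solves $Ax=b$, since every row of $A$ is a combination of the rows in $R$. If the check fails, the system is inconsistent and we report that no solution exists.

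\textbf{Step 3 (uniform random solution).} We apply Kaltofen--Saunders preconditioning to $A_R$: a random Toeplitz matrix $T$ and random diagonal matrices make the leading $r\times r$ minor of $A_R T$ nonsingular with high probability. We then sample a uniformly random $z\in\Field_p^{n}$ and solve for $x$ in $A_R x = b_R$ with $x$ agreeing with a random affine parametrization. Concretely, we pick $z$ uniform, set $b' = b_R - A_R z$, find one particular solution $w$ of $A_R w = b'$ supported on the preconditioned leading coordinates, and output $x = z + w$. This gives $x = x_0 + (\text{uniform kernel element})$, which is uniform over the solution set because the map $z \mapsto x$ is a projection onto the affine solution space along a fixed complement. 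The square solve uses Wiedemann with $O(n)$ black-box products, each costing $O(|R| t + n\log n) = \tilde O(nt)$. This gives $\tilde O(n^2 t)$ in total, and with Step 1 the overall time is $\tilde O((m+n^2)t)$.
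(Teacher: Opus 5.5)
\textbf{There is a genuine gap, in Step~1.} Your Steps 2 and 3 are sound and match the end of the paper's proof. Once $O(n)$ rows spanning the row space are known, the paper calls Wiedemann's solver (\cref{lem:sparse-eq-sys}) on them and then verifies $Ax=b$ in $O(mt)$ time, which is exactly your consistency argument. Your Toeplitz-preconditioned affine parametrization is also a valid way to get uniformity when $A_R$ has full row rank.

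Step 1, however, carries essentially all the difficulty of the theorem, and you leave it as a conjecture. The recursive-halving sketch does not give the bound. A merge tree whose leaves are blocks of at most $n$ rows has $\Theta(m/n)$ merges, so even at $\tilde O(n^2t)$ per merge the total is $\tilde O(mnt)$. Worse, each merge must output an exact independent set of at most $n$ rows, and rank computations alone do not tell you which rows to keep. As the paper notes, the best known way to extract a maximal independent set of rows costs $O(\nnz(A)+\rank(A)^\omega)$, which is precisely the bottleneck to avoid. The greedy variant fails similarly: if the span-membership structure is recomputed by a Wiedemann-type solve after each of up to $n$ insertions, the total is $\tilde O(n^3t)$.

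\textbf{The missing idea is to relax the goal and use a static sampling round.}
\begin{itemize}
\item \emph{Approximate basis.} The paper asks only for at most $16n$ rows, not necessarily independent, whose span is the full row space. Wiedemann's solver handles such an $O(n)\times n$ system in $\tilde O(n^2t)$ regardless of rank, so full row rank is never needed.
\item \emph{One round} (\cref{lem:sparse-approx-basis}). Sample a random set $T$ of $4n$ rows. Build a static dependence tester on them in $\tilde O(n^2t)$ time that answers ``is $a_j\in\mathrm{span}\{a_i\}_{i\in T}$?'' in $O(t\log m)$ time (\cref{lem:lin-dep-testing}). Then delete every other row lying in that span.
\item \emph{The tester.} The paper compresses columns with a sparse random $H$ that preserves the rank (\cref{lem:random-sparse-matrix-det}), so that the kernel becomes one-dimensional. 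Alternatively, and in the spirit of your ``random vector $y$'', a uniformly random $y\in\ker A_T$ from Wiedemann's solver already works. Rows in the span always satisfy $a_j\cdot y=0$; any other row does so with probability only $1/p$.
\item \emph{Win/win argument.} While more than $m/2$ rows are independent of the partially drawn sample, each further draw raises its rank with probability above $1/2$. Since the rank is at most $n$, with probability at least $1/2$ this phase ends within $4n$ draws. After that, at least $m/2-4n\ge m/4$ rows are deleted, using $m>16n$.
\end{itemize}
Hence $O(\log m)$ rounds, each costing $\tilde O((m+n^2)t)$ with $m$ shrinking geometrically, reduce the set to at most $16n$ rows. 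Without cheap per-row span tests against a fixed small set, together with a constant-fraction-removal argument of this kind, your Step~1 does not achieve the claimed running time.
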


For us, the parameter regime of interest is $m \approx n^2$ and $t = O(1)$. In this regime the algorithm takes time $\tilde O(n^2)$, which is \emph{near-optimal}, since already specifying the input takes time $\Omega(n^2)$.

\subsection{Technical Proof}

\begin{proof}[Proof of \cref{thm:reduct-real-to-int}]
We start with a description of the preprocessing phase. Given $A \subseteq \Real$, we first list all 3SUM solutions in time $O(n^2 \log n)$. That is, we first sort $A = \set{a_1 < \dots < a_n}$ in time $O(n \log n)$. Then, for each pair $i, j \in [n]$ we binary-search whether $a_i + a_j \in A$. (Note that this algorithm can indeed be implemented in the restricted real RAM model mentioned before.) Let $S = \set{(i, j, k) \in [n]^3 : a_i + a_j = a_k}$ denote the set of solutions.

Sample a uniformly random prime $p$ in the range $[n^{10}, 2n^{10}]$, and consider the matrix $M$ of size $|S| \times n$ over $\Field_p$, where the row corresponding to $(i, j, k) \in S$ is exactly $e_i + e_j - e_k$. Compute a random solution $x \in \Field_p^n$ to $M x = 0$ by one call to \cref{thm:sparse-rect-eq-sys}. Below we will prove that, with high probability, $a_i + a_j = a_k$ if and only if $x_i + x_j = x_k$. By that claim, it suffices to solve the $\Field_p$ version of Preprocessed 3SUM: preprocess the set $X = \set{x_1, \dots, x_n} \subseteq \Field_p$, and then upon query of a subset $X' \subseteq X$, decide whether $X'$ contains a 3SUM solution (modulo $p$). This problem can easily be reduced to 2 instances of integer Preprocessed 3SUM over the universe $\set{0, \dots, O(p)}$ (one for the solutions with $x_i + x_j \in \set{0, \dots, p-1}$ and one for the solutions with $x_i + x_j \in \set{p, \dots, 2p-2}$, when viewing the $x_i$'s as integers).

\begin{claim}
With high probability, $a_i + a_j = a_k$ if and only if $x_i + x_j = x_k$ for all $i, j, k \in [n]$.
\end{claim}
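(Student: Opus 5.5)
The forward direction is immediate: if $a_i + a_j = a_k$ then $(i,j,k) \in S$, so $e_i + e_j - e_k$ is a row of $M$. Every solution of $Mx = 0$ therefore satisfies $x_i + x_j - x_k = 0$ over $\Field_p$. All the work is in the reverse direction. I will fix a non-solution triple $(i,j,k)$ with $a_i + a_j \neq a_k$, bound the probability that $x_i + x_j = x_k$, and take a union bound over the at most $n^3$ such triples.

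\textbf{Step 1: linear independence over $\Rational$.} Let $w = e_i + e_j - e_k$ and let $M_\Rational$ be the same matrix viewed over $\Rational$. I claim $w$ is not in the rational row span of $M_\Rational$. Every row $r$ of $M_\Rational$ satisfies $r^T a = 0$, where $a = (a_1,\dots,a_n) \in \Real^n$. Hence every rational combination of rows also annihilates $a$. But $w^T a = a_i + a_j - a_k \neq 0$, so $w$ is not such a combination. Equivalently, stacking $w$ under $M_\Rational$ raises the rank by one. This is exactly where the real numbers enter; after this step everything is combinatorial about the integer matrix.

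\textbf{Step 2: transfer to $\Field_p$.} Let $r = \operatorname{rank}_\Rational(M)$. Since the rank goes up when $w$ is appended, there is an $(r+1)\times(r+1)$ minor $D$ of $\begin{pmatrix} M \\ w^T\end{pmatrix}$ that is nonzero over $\Int$. Entries lie in $\{-1,0,1\}$ and rows have at most $3$ nonzeros, so Hadamard's bound gives $|D| \le 3^{(r+1)/2} \le 3^{n}$. Such an integer has at most $O(n)$ prime factors, whereas $[n^{10}, 2n^{10}]$ contains $\Omega(n^{10}/\log n)$ primes. Hence with probability $1 - \tilde O(n^{-9})$ the prime $p$ does not divide $D$, and $w$ is not in the $\Field_p$-row span of $M$.

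One subtlety: the union bound should not be taken over this event separately for each triple, since the prime is shared. It is cleaner to multiply the at most $n^3$ relevant minors together, one per non-solution triple. The product has absolute value at most $3^{n^4}$ and thus $O(n^4)$ prime factors, which is still a negligible fraction of the primes in range. So with high probability $p$ is good simultaneously for all triples.

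\textbf{Step 3: randomness of $x$.} Fix a good $p$. Since $w \notin \operatorname{rowspan}_{\Field_p}(M) = \ker(M)^\perp$, the linear functional $x \mapsto w^T x$ is not identically zero on the subspace $\ker(M) \subseteq \Field_p^n$. Its restriction is therefore a surjective linear map onto $\Field_p$ whose fibers all have equal size. By \cref{thm:sparse-rect-eq-sys}, $x$ is uniform on $\ker(M)$, so $\Pr[w^T x = 0] = 1/p \le n^{-10}$. The hypothesis $p > m^4$ of that theorem holds because $m \le n^2$. A union bound over at most $n^3$ triples gives failure probability $O(n^{-7})$ plus the failure probability of Step 2, which is high probability.

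I expect Step 2 to be the main obstacle: the rational independence has to survive reduction modulo a random prime, uniformly over all triples. The determinant-size plus prime-counting argument handles it, provided the union bound is taken over triples carefully as described above. Any failure probability in the internal randomness of \cref{thm:sparse-rect-eq-sys} adds to the total.
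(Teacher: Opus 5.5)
Your proof is correct, but it follows a different route from the paper's.

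The paper uses Frank--Tardos (\cref{thm:ft}) existentially to obtain one integer vector $v \in \ker M$ with $\norm{v}_\infty \le \exp(O(n^3))$ and $v_i + v_j \neq v_k$ for every non-solution. A union bound over triples shows that a random $p$ preserves all these non-equalities. It then writes the uniform kernel element as $x = y + \alpha v$, with $\alpha \in \Field_p$ uniform and independent of $y$, so that $x_i + x_j - x_k$ is uniform.

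You never build a global certificate. Instead you argue per non-solution:
\begin{itemize}
\item the real vector $a$ itself witnesses that $w = e_i + e_j - e_k$ lies outside the real, hence rational, row span;
\item a nonzero $(r+1)\times(r+1)$ minor of size $\exp(O(n))$, bounded by Hadamard, transfers this to $\Field_p$;
\item $\mathrm{rowspan}_{\Field_p}(M) = (\ker M)^\perp$ then makes $x \mapsto w^T x$ a nonzero functional on $\ker M$, hence uniform.
\end{itemize}
Your version is fully elementary: it removes Frank--Tardos even from the analysis, which matches the paper's stated wish to avoid that hammer. It is also quantitatively sharper, since only $O(n)$ rather than $O(n^3)$ primes are bad per triple. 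The paper's version is shorter given the black box, and it packages all non-solutions into the single vector $v$.

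Three small points.
\begin{itemize}
\item Rows of $M$, and $w$ itself, contain an entry $2$ when $i = j$. Hadamard therefore gives $|D| \le 5^{(r+1)/2}$ rather than $3^{(r+1)/2}$, which changes nothing.
\item Your worry about the shared prime is unfounded. A union bound needs no independence, so bounding $\Pr[p \mid D_{ijk}]$ per triple and summing over the $n^3$ triples is valid; the paper does exactly this. Your product-of-minors variant is also fine.
\item Concluding $w \notin \mathrm{rowspan}_{\Field_p}(M)$ from $p \nmid D$ also uses $\rank_{\Field_p}(M) \le r$. This holds because every $(r+1)$-minor of $M$ already vanishes over $\Int$.
\end{itemize}
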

\begin{proof}
Apply the Frank--Tardos theorem (\Cref{thm:ft}) on $a = (a_1, \dots, a_n) \in \Real^n$ to obtain integers $v \in \Int^n$ bounded by $\norm{v}_\infty \leq \exp(O(n^3))$ such that $a_i + a_j = a_k$ if and only if $v_i + v_j = v_k$. We claim that, with high probability, this is further equivalent to $v_i + v_j \equiv v_k \mod{\,p}$. Indeed, the only failure event is when $v_i + v_j \neq v_k$ and $p$ divides $v_i + v_j - v_k$. This number has at most $\log (2 \norm{v}_\infty) \leq O(n^3)$ prime divisors, thus the probability of selecting $p \in [n^{10}, 2n^{10}]$ among these $O(n^3)$ bad choices is at most $\tilde O(n^{-7})$ by the Prime Number Theorem. By a union bound over the $n^3$ choices of $(i, j, k)$, the total error probability is $\tilde O(n^{-4})$.

Thus, view $v$ as an element of $\Field_p^n$. By construction we have $M v = 0$ (now viewed over $\Field_p$). Sampling a uniformly random kernel element $x \in \ker M$ can therefore be equivalently viewed as sampling $y \in \ker M$ and a uniformly random scalar $\alpha \in \Field_p$, and setting $x \coloneqq y + \alpha v$. But then, for any triple $(i, j, k)$ with $a_i + a_j \neq a_k$ we have that $x_i + x_j - x_k = (y_i + y_j - y_k) + \alpha (v_i + v_j - v_k)$. Since $v_i + v_j - v_k \neq 0$ and since $\alpha \in \Field_p$ is sampled uniformly and independently, it follows that also $x_i + x_j - x_k$ is uniformly distributed in $\Field_p$. The probability of hitting zero is thus $1/p$, and by a union bound over all non-solution triples the total error probability is $O(n^{-7})$. 
\end{proof}

Let us finally analyze the running time. Listing the 3SUM solutions takes time $\tilde O(n^2)$. The call to \cref{thm:sparse-rect-eq-sys} is with parameters $m = O(n^2)$ and $t = 3$, and thus takes time $\tilde O(n^2)$, too. Finally, we preprocess two instances of integer Preprocessed 3SUM in time $O(P(n))$. In the query phase, we issue two queries to respective two instances taking time $O(Q(n))$ as claimed.
\end{proof}
\section{Sparse Linear Algebra}
It remains to design the fast algorithm for solving sparse, rectangular equation systems.

\thmsparserecteqsys

Throughout this section, $p$ is a prime and we work over the finite field $\Field_p$. We assume that $p$ fits into $O(1)$ machine words so that field operations in $\Field_p$ can be executed in $O(1)$ time. 

\subsection{Technical Overview}

We provide a quick overview of the main ideas behind \Cref{thm:sparse-rect-eq-sys}.
In our setting, think of $A$ as an $n^2 \times n$ matrix with only three nonzero entries per row, encoding all 3SUM solutions.

Technically, our algorithm can be viewed as a reduction to solving a sparse linear equation $Ax = b$ where $A$ is a square, $n \times n$ matrix with again only $O(1)$ entries per row.
For this setting, there already exists an efficient algorithm due to Wiedemann~\cite{Wiedemann86}, which would yield time $\tilde O(n^2)$ with our parameters:

\begin{lemma}[Sparse Linear Equation Solving~\cite{Wiedemann86}] \label{lem:sparse-eq-sys}
There is a randomized algorithm that, given a matrix \smash{$A \in \Field_p^{m \times n}$} and a vector \smash{$b \in \Field_p^m$}, returns a uniformly random solution $x$ to $A x = b$ (or reports that no such solution exists) in time $\tilde O((n + m) \nnz(A))$.
\end{lemma}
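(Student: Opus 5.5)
The plan is to follow Wiedemann's approach together with the preconditioning of Kaltofen and Saunders~\cite{KaltofenS91}. The algorithm only touches $A$ through matrix--vector products $v \mapsto Av$ and $w \mapsto A^T w$, and each such product costs $O(n + m + \nnz(A))$ field operations. The target bound $\tilde O((n+m)\nnz(A))$ therefore allows $\tilde O(n+m)$ black-box products, plus $\tilde O((n+m)^2)$ lower-order work that is also absorbed (we may assume every row and column is nonzero). I split the task into three parts: (i) determine $r = \operatorname{rank}(A)$ and reduce to a nonsingular $r \times r$ system; (ii) solve a nonsingular sparse system; (iii) turn a particular solution into a uniformly random one.

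\textbf{Step (ii), the nonsingular core.} Let $B$ be an $N \times N$ nonsingular black-box matrix. Pick random $u, v \in \Field_p^N$ and compute the scalar sequence $s_i = u^T B^i b$ for $i < 2N$; this takes $2N$ products. Berlekamp--Massey in $O(N^2)$ time returns the minimal polynomial $f$ of this sequence. With probability at least $1 - O(N/p)$, $f$ equals the minimal polynomial of $B$ restricted to the Krylov space of $b$. Since $B$ is nonsingular, $f(0) \neq 0$. Writing $f(\lambda) = f(0) + \lambda g(\lambda)$, we get $x = -g(B)b/f(0)$, computed with $N$ further products. We verify $Bx = b$ directly and repeat on failure.

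\textbf{Step (i), preconditioning.} Draw random nonsingular Toeplitz (or butterfly) matrices $T_1 \in \Field_p^{m \times m}$ and $T_2 \in \Field_p^{n \times n}$; each supports products in $\tilde O(m)$ or $\tilde O(n)$ time via FFT. Form $\tilde A = T_1 A T_2$. By the Kaltofen--Saunders lemma together with Schwartz--Zippel, all leading $i \times i$ minors of $\tilde A$ with $i \le r$ are nonzero with probability $1 - O(n^2/p)$. (If $p$ is too small, we pass to an extension field of degree $O(\log n)$ and map back at the end. I expect this to be the fiddliest technical point.) The rank $r$ can then be found as the degree of the minimal polynomial of a randomly diagonally scaled $\tilde A^T \tilde A$-type product, or by binary search using the nonsingularity test of step (ii) on leading minors. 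Consistency of $Ax = b$ is checked as follows: solve the leading $r \times r$ block system, extend by zeros, and test whether the result satisfies the full system. If it fails, report that no solution exists. Undoing $T_2$ then yields a particular solution $x_0$.

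\textbf{Step (iii), uniform sampling.} This step is the conceptually crucial one. Sample $z \in \Field_p^n$ uniformly and independently of everything else. Run the deterministic-given-its-coins procedure above on the right-hand side $b + Az$ to obtain some $w$ with $Aw = b + Az$, and output $x = w - z$. The key observation is that $w$ depends only on $Az$ and on coins independent of $z$. Conditioned on $Az$, the vector $z$ is uniform on a coset of $\ker A$, so $x$ is uniform on the affine space $x_0 + \ker A$ of all solutions.

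The main obstacle I anticipate is step (i). Obtaining a provably correct rank computation and generic-rank-profile preconditioning within $\tilde O(n+m)$ black-box products, with failure probability controlled by $p$, is exactly the contribution of~\cite{KaltofenS91}. Accordingly, I would invoke their preconditioner theorem rather than reprove it.
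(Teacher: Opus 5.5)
Your proposal is correct and follows the same approach the paper relies on: the paper gives no proof of its own and simply cites Wiedemann, and you reconstruct that standard method. Concretely, you use a Krylov sequence with Berlekamp--Massey for the nonsingular core, Kaltofen--Saunders Toeplitz preconditioning to reduce to a nonsingular leading $r \times r$ block, and $O(r)$ black-box products costing $\tilde O(n+m+\nnz(A))$ each; your random-shift step $x = w - z$ is a clean, valid way to get exact uniformity, and it still works when mapping back from an extension field via any $\Field_p$-linear functional fixing $1$. One small caution: over $\Field_p$ an unscaled $\tilde A^T\tilde A$ can have smaller rank than $\tilde A$, so for the rank stick to your binary search over leading minors, or to the Kaltofen--Saunders rank algorithm (\cref{lem:sparse-rank}).
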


In fact, the same paper gives algorithms for the rectangular setting, however for our parameters the running time bound becomes $\tilde O(n^3)$, which is insufficient.
The main bottleneck in Wiedemann's algorithm for rectangular matrices as well as follow-up improvements~\cite{KaltofenS91}, is that they solve the rectangular case by first (implicitly) computing a basis of the row space. 
At this point, the algorithm only has to deal with $\rank(A)$ many rows and we are basically back at the fast, square case from \Cref{lem:sparse-eq-sys}.
However, even the state-of-the-art approach to compute the rank and such a linearly independent set of columns runs in time $O(\nnz(A) + \rank(A)^\omega)$, or $O(n^\omega)$\footnote{Here, $\omega$ denotes the running time exponent of fast matrix multiplication.} in our setting~\cite{CheungKL12}. 
Hence, these approaches cannot yield near-quadratic running time if $\omega > 2$.

Our idea is to replace this expensive reduction to the square case by a simpler one.
Indeed, instead of computing a basis of the row space, it is sufficient to find $O(n)$ rows that span the full row space, since then Wiedemann's algorithm (\Cref{lem:sparse-eq-sys}) solves the equation in time $\tilde O(n^2)$.
We show that such an \emph{approximate basis} can be computed efficiently with a relatively simple, randomized algorithm.
The key insight behind this algorithm is the following win/win scheme:
If many linearly independent rows are easy to detect because most subsets of rows have a high rank, then a small random subset of rows will also have a high rank.
On the other hand, if very specific rows need to be selected to span the full row span, then a random subset of rows allows us to detect and remove many obsolete linearly dependent rows (using weaker but more efficient sparse linear algebra tools).
Recursively applying the same idea yields an approximate basis after $O(\log n)$ iterations.
See \Cref{lem:sparse-approx-basis} in \Cref{sec:rectangular} for a formal description of this algorithm.

\subsection{Basic Sparse Linear Algebra}

We rely on a basic computational fact due to Kaltofen and Saunders~\cite{KaltofenS91}.
\begin{lemma}[Sparse Rank Computation~\cite{KaltofenS91}] \label{lem:sparse-rank}
There is a randomized algorithm that, given a matrix \smash{$A \in \Field_p^{m \times n}$}, computes $\rank(A)$ with high probability in time $\tilde O((n + m) \nnz(A))$.
\end{lemma}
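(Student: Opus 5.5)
The plan is to follow Kaltofen and Saunders and reduce rank computation to a minimal-polynomial computation that uses only black-box matrix--vector products.

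First I make $A$ square. I pad it with zero rows or columns to an $N \times N$ matrix with $N = \max(m,n)$. This changes neither the rank nor $\nnz(A)$.

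Next I precondition. I form $B = T_1 A T_2 D$, where $T_1, T_2$ are random Toeplitz matrices (unit upper- and lower-triangular, respectively) and $D$ is a random diagonal matrix, all with entries drawn uniformly from $\Field_p$. I would prove the key structural property of this preconditioning, again citing Kaltofen--Saunders for the details. Let $r = \rank(A)$. Then, with probability at least $1 - O(N^2/p)$:
\begin{itemize}
\item the leading $r \times r$ principal minor of $T_1 A T_2$ is nonzero, and
\item the minimal polynomial of $B$ equals $x \cdot f(x)$ with $\deg f = r$ and $f(0) \neq 0$.
\end{itemize}
The first item follows from Schwartz--Zippel once one exhibits a single choice of Toeplitz matrices making the minor nonzero. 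The minor is a polynomial of degree $O(N)$ in the random entries. The second item follows because, after the first holds, multiplying by a generic $D$ makes the nonzero eigenvalues of $B$ distinct and the nilpotent part have index at most one. Consequently $\rank(A) = \deg(\mathrm{minpoly}(B)) - 1$. When $p$ is too small for these Schwartz--Zippel bounds (at least $p = \Omega(N^3)$, say), I work in an extension field $\Field_{p^k}$ with $k = O(\log N)$. The rank is unchanged, and each extension-field operation costs only polylogarithmic overhead.

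Then I compute the minimal polynomial by Wiedemann's method. I pick random vectors $u, v$ and compute the scalar sequence $s_i = u^T B^i v$ for $i < 2N$. Berlekamp--Massey on this sequence returns its minimal generating polynomial in $\tilde O(N)$ time. With probability $1 - O(N/p)$, this polynomial equals the minimal polynomial of $B$, by the standard projection argument: the generating polynomial of $u^T B^i v$ is a divisor of $\mathrm{minpoly}(B)$, and it is proper only on a low-degree hypersurface in $(u,v)$.

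For the cost, each product $B w$ consists of one diagonal scaling, two Toeplitz products in $\tilde O(N)$ time via FFT, and one sparse product in $O(\nnz(A))$ time. The $2N$ products therefore cost $\tilde O(N(N + \nnz(A))) = \tilde O((n+m)\nnz(A))$, assuming, as we may, that $\nnz(A) \geq$ the number of nonzero rows and columns after discarding zero ones. To get high probability, I repeat $O(\log n)$ times and take the maximum answer. This is valid because every failure can only underestimate the rank: a proper divisor of the minimal polynomial, or a degenerate preconditioning, only lowers the degree.

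The main obstacle is the structural claim about the preconditioned matrix. Specifically, I must show that the Toeplitz preconditioning generically yields a nonsingular leading $r \times r$ minor, and that the minimal polynomial then has exactly the form $x \cdot f(x)$ with $\deg f = r$. Everything else is a routine combination of Wiedemann's Krylov method, Berlekamp--Massey, and FFT-based Toeplitz multiplication.
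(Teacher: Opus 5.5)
Your plan is the Kaltofen--Saunders algorithm: Toeplitz preconditioning, random diagonal scaling, and Wiedemann's Krylov sequence with Berlekamp--Massey. That is exactly what the paper relies on. It gives no proof of this lemma and only cites \cite{KaltofenS91}, so your route matches, with the structural preconditioning claim deferred to the same source.

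Two small corrections are needed.

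First, padding to $N=\max(m,n)$ does not ensure $r<N$. This matters especially after you discard zero rows and columns, which can leave a square nonsingular matrix. If $r=N$, the minimal polynomial of $B$ has degree $N$ and no factor $x$. Your formula then returns $N-1$ on every repetition, and taking the maximum does not help. To fix this, either pad with one extra zero row and column, or output $\deg g$ instead of $\deg g-1$ when the Berlekamp--Massey polynomial $g$ has $g(0)\neq 0$. With either fix your one-sided-error argument is valid, because $\deg\mathrm{minpoly}(B)\le\rank(B)+1\le r+1$ always holds.

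Second, a nonzero leading $r\times r$ minor does not by itself make the nonzero eigenvalues of $T_1AT_2D$ distinct. For example, in characteristic $2$ a leading block $\bigl(\begin{smallmatrix}0&1\\1&0\end{smallmatrix}\bigr)$ gives eigenvalues $\pm\sqrt{d_1d_2}$, which coincide, and passing to an extension field does not change the characteristic. The diagonal-scaling step should instead start from the stronger property that all leading principal minors of orders $1,\dots,r$ are nonzero. That is what the Kaltofen--Saunders Toeplitz result provides, so state that property in your first bullet.
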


We will also rely on the following lemma which is implicit in~\cite{Wiedemann86}; we provide a quick (and somewhat simplified) proof for convenience.

\begin{lemma}[\cite{Wiedemann86}] \label{lem:random-sparse-matrix-det}
Let $A \in \Field_p^{m \times n}$ be a matrix with rank at most $r$, and let $\delta > 0$. Let $H \in \Field_p^{n \times r}$ be a matrix where each entry is sampled independently such that with probability $q = \min\set{10 (\log n + \log \delta^{-1}) / r, 1}$ the entry is uniformly random from $\Field_p$, and with the remaining probability $1 - q$ the entry is zero. Then,
\begin{equation*}
    \Pr(\rank(A H) = \rank(A)) \geq 1 - O\parens*{\frac{n}{p}} - O(\delta).
\end{equation*}      
\end{lemma}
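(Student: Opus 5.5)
The argument reduces to the column space of $A$ and proceeds one column of $H$ at a time. Since $\rank(AH) \le \rank(A)$ always holds, it suffices to show that $\rank(AH) \ge \rank(A)$ with the stated probability. Write $s = \rank(A) \le r$ and fix a basis of the row space of $A$. Equivalently, we may replace $A$ by an $s \times n$ matrix $B$ of full row rank with the same row space. Then $\rank(AH) = \rank(BH)$, because $A = CB$ for some $C$ of full column rank. The columns $h_1, \dots, h_r$ of $H$ are independent sparse random vectors, and we reveal them one at a time. We track the subspace $U_\ell = \operatorname{span}(Bh_1, \dots, Bh_\ell) \subseteq \Field_p^s$.

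The key step is to show that, whenever $U_{\ell-1} \ne \Field_p^s$, the new vector $Bh_\ell$ escapes $U_{\ell-1}$ with reasonable probability. Pick a nonzero $w \perp U_{\ell-1}$. Then $w^T B$ is a nonzero row vector $u$ in the row space of $A$, and $Bh_\ell \notin U_{\ell-1}$ holds as soon as $u^T h_\ell \ne 0$. Condition on the support pattern of $h_\ell$. If the support meets $\operatorname{supp}(u)$, then $u^T h_\ell$ is a nonzero linear polynomial in uniformly random field entries, so it vanishes with probability exactly $1/p$. Hence
\[ \Pr(u^T h_\ell = 0) \le (1-q)^{\nnz(u)} + 1/p. \]

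This exposes the main obstacle: the vector $u$ may be very sparse, in which case a single sparse column rarely hits its support. To handle this, I would choose $w$ cleverly instead of arbitrarily. Among all nonzero vectors in $(U_{\ell-1})^\perp B$, take $u$ of maximal support. The set of vectors in this subspace of dimension $d = s - \dim U_{\ell-1}$ that vanish on a given coordinate is a proper subspace, unless all of them vanish there. Hence a generic element has support equal to the union of the supports. A standard argument (sparse vectors in a row space of rank $d$, via a reduced echelon form of the subspace) then shows the union of supports has size at least $d$. So $\nnz(u) \ge d$, and the per-step failure probability is at most $e^{-qd} + 1/p$.

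The remaining work is a potential argument over the $r$ columns. While $d$ is large, each column succeeds with constant probability. With $q = 10(\log n + \log\delta^{-1})/r$, the factor $e^{-qd}$ is tiny whenever $d \ge r/10$ or so. When $d$ becomes small, e.g.\ $d < r/(10\log)$, the success probability per column is about $qd$. The expected number of columns needed to reduce $d$ from $d$ to $0$ is then
\[ \sum_d \frac{1}{1-e^{-qd}} \approx \sum_d \max\set{1, \frac{1}{qd}} \le s + \frac{\ln s}{q}, \]
and this is at most about $r$ plus a slack. To make the count exact, I would argue more carefully using a Chernoff/martingale bound on the number of successes, with the $\log\delta^{-1}$ term in $q$ absorbing the tail. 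If budget is tight, I would instead split the $r$ columns into two halves: one half to reach small $d$ cheaply, and the remaining columns, each hitting the last few sparse directions with probability $\approx qd$. Finally, the $1/p$ terms summed over at most $r \le n$ columns give the $O(n/p)$ term, and the concentration failure gives $O(\delta)$. I expect getting the constant-factor column budget exactly $r$ (rather than $O(r)$) to be the delicate part. If needed, I would handle it by noting that steps with large $d$ succeed with probability $1 - n^{-10}\delta^{10}$ each, so essentially no columns are wasted there.
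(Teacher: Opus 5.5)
\textbf{There is a genuine gap, and it is in the aggregation over the $r$ columns, not in the per-step estimate.} Your reduction to a full-row-rank $B$ and your per-step escape bound are fine. The trouble is that the lemma must hold when $\rank(A)=s=r$. This is the case \cref{lem:lin-dep-testing} actually uses: there $H$ has $r+1$ columns and $\hat A$ may have rank $r+1$.

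In that case there is no slack at all: every one of the $r$ columns must increase $\dim U_\ell$. Your only control is $\nnz(u)\ge d$, which certifies success probability at least $1-e^{-qd}-1/p$. For $d=1$ this is at most about $q=10(\log n+\log\delta^{-1})/r$, which is tiny when $r$ is large (e.g.\ $r=n$, $\delta=1/n$). So the certified overall success probability is at most roughly $q$, nowhere near $1-O(\delta)$.

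Your own expected-count estimate says the same thing. Using $e^x-1\le(e-1)x$ on $[0,1]$, the excess $\sum_{d\le s}(1-e^{-qd})^{-1}-s$ is $\Omega(\ln(1/q)/q)$. For $r=n$ and $\delta=1/n$ that is $\Omega(r)$ extra columns you do not have. None of your proposed fixes can repair this, because the deficiency lives in the small-$d$ regime and the number of required successes equals the number of columns:
\begin{itemize}
\item concentration bounds on the number of successes do not help;
\item splitting the columns into halves does not create slack;
\item the observation that large-$d$ steps almost surely succeed addresses the wrong regime.
\end{itemize}

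The lemma is true, so the true per-step probabilities are much better. The missing idea is that $\nnz(u)\ge d$ is tight only for worst-case subspaces. It ignores that a sparse surviving direction $u$ with support $T$ can only survive if the earlier random masks are poorly connected to $T$. Exploiting this needs a global statement about the whole random mask, not a column-by-column budget.

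This is exactly what the paper does. After reducing to full row rank, it proceeds as follows.
\begin{enumerate}
\item It fixes a set $J$ of $s$ linearly independent columns of $A$.
\item Via Hall's theorem and a union bound over Hall violators (this is where $q\approx 10\log(n/\delta)/r$ enters), it shows that the $s\times r$ random bipartite mask on the rows $J$ of $H$ has a matching saturating $J$ with probability $1-O(\delta)$.
\item With $I$ the set of matched columns, Cauchy--Binet shows $\det(AH_I(x))$ is a nonzero polynomial of degree at most $s$. The matching monomial occurs only in the $K=J$ term, with coefficient $\pm\det(A_J)\neq 0$.
\item Schwartz--Zippel then gives the $O(n/p)$ term.
\end{enumerate}
To rescue your sequential approach, you would essentially have to prove this same expansion property of the mask.
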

\begin{proof}
It always holds that $\rank(A H) \leq \rank(A)$, so focus on the other direction. Put $s := \rank(A) \leq r$. Without loss of generality we can assume $m=s$ by removing all linearly dependent rows from $A$. The goal is to show that the $s\times r$ matrix $A H$ contains a nonsingular $s\times s$ minor.

To show this, we view the process of sampling $H$ in two stages. In the first stage, we sample a \emph{mask} matrix $M \in \set{0, {*}}^{n \times r}$, where each entry is ${*}$ with probability~$q$. In the second stage we replace ${*}$ by uniformly random field elements to obtain $H$.

We start analyzing the first stage. Let $J \subseteq [n]$ be a maximum subset of linearly independent columns in~$A$, so $|J| = s$. Let $A_J$ denote the matrix restricted to these columns (so that \makebox{$\det(A_J) \neq 0$}), and let $M_J$ denote the submatrix of $M$ restricted to the rows in $J$. Then $M_J$ can be viewed as the adjacency matrix of a bipartite graph with $s$ vertices on the left and $r$ vertices on the right, where each edge is sampled independently with probability~$q$. With probability $1-O(\delta)$, that graph contains a matching covering all $s$ vertices on the left~\cite{ErdosR63}.\footnote{By Hall's theorem, a matching covering the left side fails to exist only if there is a set $L$ of $k$ left vertices whose neighbors are contained in a set of $k-1$ right vertices. A union bound gives failure probability at most
\begin{equation*}
    \sum_{k=1}^s \binom{s}{k}\binom{r}{k-1} e^{-qk(r-k+1)}
    \leq \sum_{k=1}^r \binom{r}{k}\binom{r}{k-1} e^{-qk(r-k+1)}
    \leq 2\sum_{k=1}^{\ceil{r/2}} n^{2k} e^{-qkr/2}
    = O(\delta).
\end{equation*}
Here we used that $q < 1$ and thus $qr=10(\log n+\log\delta^{-1})$, $r \leq n$, and $\delta\leq1/2$. If $q = 1$, the statement is trivially satisfied.} Fix such a matching and let $I \subseteq[r]$ be its set of $s$ matched right vertices. This choice depends only on the mask.

Now focus on the second stage when we replace each ${*}$ by a uniformly random field element to obtain~$H$. For the moment, pretend that each random field element is instead a symbolic variable. Let $x$ denote the set of these variables, and write $H(x)$ to denote the resulting symbolic matrix. Let $H_I(x)$ be its restriction to the columns in $I$, and let $H_{J, I}(x)$ be its restriction to the rows in $J$ and columns in $I$. By the previous paragraph, the determinant $\det(H_{J, I}(x))$ is not identically zero (viewed as a polynomial in $x$), as the chosen matching corresponds to a nonzero term in $\det(H_{J, I}(x))$. The Cauchy--Binet formula further implies that
\begin{equation*}
    \det(A H_I(x)) = \sum_{\substack{K \subseteq [n]\\|K| = s}} \det(A_K) \det(H_{K, I}(x))
\end{equation*}
is not identically zero. Indeed, the monomial corresponding to the chosen matching has coefficient $\pm\det(A_J)\neq0$ and cannot occur in any other summand, since its variables use exactly the rows in~$J$.

In other words, $\det(A H_I(x))$ is a nonzero polynomial with degree at most~$s$. Thus, the Schwartz--Zippel lemma implies that substituting random field elements leads to $\det(A H_I(x))=0$ with probability at most $s/p \leq n/p$. Otherwise $A H$ has a nonsingular $s \times s$ minor and hence rank $s$. Together with the first-stage failure probability, this proves the claimed bound.
\end{proof}

\begin{lemma}[Linear Dependence Testing] \label{lem:lin-dep-testing}
Let $\delta > 0$. There is a randomized algorithm that preprocesses (nonzero) vectors $a_1, \dots, a_m \in \Field_p^n$ in time $\tilde O((n + m) \sum_i \nnz(a_i) \cdot \log \delta^{-1})$, and then supports the following queries: given $v \in \Field_p^n$, decide whether $v$ is linearly dependent from $a_1, \dots, a_m$ in time $O(\nnz(v) (\log n + \log \delta^{-1}))$. Each query is answered correctly with probability $1 - O(n / p) - \delta$.
\end{lemma}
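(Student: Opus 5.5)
We plan to build the data structure from \Cref{lem:random-sparse-matrix-det}. Let $A \in \Field_p^{m \times n}$ be the matrix with rows $a_1, \dots, a_m$, and set $r \coloneqq \min\set{m, n}$, which bounds $\rank(A)$. We sample a sparse random matrix $H \in \Field_p^{n \times r}$ exactly as in \Cref{lem:random-sparse-matrix-det}: each entry is independently nonzero (and then uniform in $\Field_p$) with probability $q = \min\set{10(\log n + \log\delta^{-1})/r, 1}$. We store $H$ row by row, keeping only its nonzero entries. Up to the logarithmic factor, the expected number of nonzeros per row of $H$ is $qr = O(\log n + \log \delta^{-1})$. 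To get a worst-case query time instead of an expected one, we resample any row whose count exceeds a constant times this bound; this keeps the entries independent within the analysis, up to conditioning.

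In preprocessing we compute the product $AH$ by sparse multiplication. This takes time $\sum_i \nnz(a_i) \cdot O(\log n + \log\delta^{-1})$, which is within budget. We then compute $\rank(AH)$ using \Cref{lem:sparse-rank} on the $m \times r$ matrix $AH$, in time $\tilde O((m + r)\nnz(AH)) \le \tilde O((n + m)\sum_i \nnz(a_i) \log\delta^{-1})$. Rank computation alone does not answer membership queries, though. So instead we preprocess a structure that, given a vector $w \in \Field_p^r$, decides whether $w$ lies in the row space of $AH$. This is a linear system $(AH)^T y = w$, so we will use a trick that makes it a rank comparison. We store $\rho = \rank(AH)$. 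At query time we form $vH$ in time $O(\nnz(v)\cdot qr)$, by summing the sparse rows of $H$ indexed by the support of $v$.

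The key correctness step is to apply \Cref{lem:random-sparse-matrix-det} to the $(m+1) \times n$ matrix $A_v$ obtained by appending $v$ as a row. If $v$ is dependent on the $a_i$, then $vH$ is dependent on the rows of $AH$, so the rank stays $\rho$. If $v$ is independent, then $\rank(A_v) = \rank(A) + 1$. Provided $r$ is at least that rank (take $r = \min\set{m+1, n}$), the lemma gives $\rank(A_v H) = \rank(A) + 1$ with probability $1 - O(n/p) - O(\delta)$. This probability bound holds for each fixed $v$, over the randomness of $H$. Hence $vH \notin \operatorname{rowspan}(AH)$, and the query reduces to testing whether $vH \in \operatorname{rowspan}(AH)$. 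For the case where $\rank(A) = n$ we handle things trivially, since then every $v$ is dependent.

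The main obstacle is answering the query ``is $w = vH$ in the row span of $AH$'' in time $O(\nnz(v)\log)$ rather than $\tilde O(r)$. The vector $w$ itself already has $O(\nnz(v)\,qr)$ nonzeros, so we can afford to read it. My plan is to precompute in preprocessing a basis of the left kernel of $AH$ restricted to its column space, in compressed form. Concretely, we draw a random vector $z$ and precompute $c = K z$, where $K \in \Field_p^{r}$ ranges over a random combination of the orthogonal complement of $\operatorname{rowspan}(AH)$. A random element $u$ of $\operatorname{rowspan}(AH)^{\perp} = \ker(AH)$ is obtained by one call to \Cref{lem:sparse-eq-sys} on the system $(AH)u = 0$, in time $\tilde O((m + r)\nnz(AH))$. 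Then $w \in \operatorname{rowspan}(AH)$ iff $w \cdot u = 0$ for all $u \in \ker(AH)$. For a uniformly random $u$ in the kernel, a $w$ outside the row span yields $w \cdot u \ne 0$ with probability $1 - 1/p$. So the query simply evaluates $w\cdot u = \sum_{j \in \operatorname{supp}(v)} v_j (H_j \cdot u)$. We precompute $h_j \coloneqq H_j \cdot u$ for every $j \in [n]$ in time $O(\nnz(H))$, and the query then costs just $O(\nnz(v))$. The total failure probability is $O(n/p) + O(\delta) + 1/p$, and rescaling $\delta$ by a constant gives the claim.
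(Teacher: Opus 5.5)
Your argument is correct, but it handles the kernel differently from the paper. The paper first computes $r = \rank(A)$ via \cref{lem:sparse-rank}. When $r < n-1$, it sketches with $H \in \Field_p^{n \times (r+1)}$, so that with high probability $AH$ has rank exactly $(r+1)-1$ and hence a one-dimensional kernel. Any nonzero kernel vector $x$ then gives an exact test $w^T x = 0$. Each query forms $w = vH$ explicitly, which is where the $\log n + \log \delta^{-1}$ factor in the query time comes from.

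You instead take width $\min\set{m+1,n}$ without knowing the rank, and let $\ker(AH)$ have arbitrary dimension. You then use the fact that \cref{lem:sparse-eq-sys} returns a \emph{uniformly random} kernel element $u$. Since the row span of $AH$ equals $(\ker(AH))^{\perp}$, for $w$ outside it the value $w \cdot u$ is uniform in $\Field_p$, so you pay only an extra one-sided error of $1/p$. Precomputing $Hu$ also brings your query time down to $O(\nnz(v))$.

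What the paper's route buys is that it needs only \emph{some} nonzero vector of a one-dimensional kernel, not uniform sampling from a large one. What your route shows is that, once uniform sampling is available, the sketch does no work. The vector $Hu$ is just some element of $\ker A$. So you could draw $x \in \ker A$ uniformly with one call to \cref{lem:sparse-eq-sys} on $Ax = 0$, in time $\tilde O((n+m)\nnz(A))$, and answer by testing $v^T x = 0$. This avoids \cref{lem:random-sparse-matrix-det} and $\delta$ entirely.

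For the same reason, several parts of your write-up can simply be deleted: the computation of $\rho = \rank(AH)$, the garbled sentence about $c = Kz$, and the row resampling of $H$. If you kept the resampling, you would have to justify its effect on independence, e.g.\ as conditioning on a product event of probability close to $1$.
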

\begin{proof}
Let $A \in \Field_p^{m \times n}$ be the matrix with rows $a_1, \dots, a_m$, and compute $\rank(A)$ by \cref{lem:sparse-rank}. We first describe a simple algorithm for the special case that $\rank(A) \geq n - 1$, and then later give a reduction to this case. If $\rank(A) = n$ then all queries are trivial. If $\rank(A) = n - 1$, then the kernel of $A$ has dimension exactly $1$, and thus $\ker A = \langle x\rangle$ for some nonzero \smash{$x \in \Field_p^n$}. We can compute one such $x$ by solving one equation system with \cref{lem:sparse-eq-sys}. The total preprocessing time is $\tilde O((n + m) \nnz(A))$.

Now focus on the query phase. A vector $v$ is linearly dependent from $a_1, \dots, a_m$ if and only if $x^T v = 0$. Indeed, $x^T v \neq 0$ if and only if the matrix with row vectors $a_1, \dots, a_m, v$ has no nontrivial kernel and thus has full rank. Thus, upon query of a vector $v$ it suffices to evaluate the inner product $v^T x$ in time $O(\nnz(v))$.

It remains to reduce to the case $\rank(A) \geq n - 1$. So suppose that $r := \rank(A) < n - 1$. We sample \smash{$H \in \Field_p^{n \times (r+1)}$} as in \cref{lem:random-sparse-matrix-det}. That is, we let $q = \min(10 (\log n + \log \delta^{-1}) / (r + 1), 1)$, and we independently choose each entry to be $0$ with probability $1 - q$ and to be a uniformly random field element with probability $q$. We compute the vectors \smash{$b_1, \dots, b_m \in \Field_p^{(r + 1)}$} defined by $b_i = a_i H$ (viewing $a_i$ as a row vector). Analogously, we write $B$ for the matrix with rows $b_1, \dots, b_m$. Then \cref{lem:random-sparse-matrix-det} implies that $\rank(B) = r$ with high probability, and thus we have $\rank(B) = (r + 1) - 1$, which is as desired as $r + 1$ takes the new role of $n$.

Upon query of a vector $v$, the goal is to check whether $v$ is linearly dependent from $a_1, \dots, a_m$. We instead compute $w := v H$ and query whether $w$ is linearly dependent from $b_1, \dots, b_m$. The claim is that these two conditions are equivalent (with high probability). Indeed, consider the matrix $\hat A$ with row vectors~$a_1, \dots, a_m, v$ and the matrix $\hat B$ with row vectors $b_1, \dots, b_m, w$. Then clearly $\hat B = \hat A H$. Moreover, $v$ is linearly independent from~$a_1, \dots, a_m$ if and only if $\rank(\hat A) = r + 1$. By \cref{lem:random-sparse-matrix-det}, this is further equivalent to $\rank(\hat B) = \rank(\hat A H) = r + 1$, and so the correctness follows. The error probability is $O(n / p) + O(\delta)$ per query. Rescaling $\delta$ by a constant factor gives the stated guarantee.

We finally analyze the running time per query. Each row of $H$ contains at most $O(q r) = O(\log n + \log \delta^{-1})$ nonzero entries with high probability. Thus, with high probability the computation of $w = v H$ takes time $O(\nnz(v) (\log n + \log \delta^{-1}))$, and the resulting vector $w$ satisfies $\nnz(w) \leq O(\nnz(v) (\log n + \log \delta^{-1}))$. Querying $w$ in the data structure described before takes time $O(\nnz(w))$. All in all, the query time is as claimed.
\end{proof}

\subsection{Solving the Rectangular Case}\label{sec:rectangular}

\begin{theorem}[Approximate Basis Computation] \label{lem:sparse-approx-basis}
There is a randomized algorithm that, given $t$-sparse vectors $a_1, \dots, a_m \in \Field_p^n$, selects a subset $S \subseteq [m]$ of size $|S| \leq 16n$ such that each vector in $a_1, \dots, a_m$ is linearly dependent from the vectors $\set{a_i}_{i \in S}$. The algorithm runs in time \smash{$\tilde O((m + n^2) t)$} and succeeds with high probability $1 - 1/m$ provided that $p > m^4$.
\end{theorem}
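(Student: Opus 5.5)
We maintain a current set $R \subseteq [m]$ of \emph{remaining} vectors, initially $R = [m]$, and a set $S$ of selected vectors, initially $S = \emptyset$. We run $O(\log m)$ rounds. In each round we shrink $R$ geometrically, while $S$ grows by at most $O(n)$ vectors. The invariant is that every vector removed from $R$ is linearly dependent on $\{a_i\}_{i\in S}$.

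\textbf{One round.} Let $|R| = k$. If $k \le 16n$, we add all of $R$ to $S$ and stop; by the invariant $S$ then spans everything. Otherwise we proceed as follows.
\begin{enumerate}
\item Sample a uniformly random subset $T \subseteq R$ of size $2n$ (or $4n$; we fix the constant later).
\item Add $T$ to $S$.
\item Preprocess $\{a_i\}_{i \in T}$ with \cref{lem:lin-dep-testing}, choosing $\delta = m^{-3}$. Since $\sum_{i\in T}\nnz(a_i) \le 2nt$, this takes time $\tilde O(n \cdot nt)$.
\item Query every $a_i$ with $i \in R \setminus T$, at total cost $\tilde O(kt)$.
\item Remove from $R$ every vector reported dependent on $T$, together with $T$ itself.
\end{enumerate}
Over $O(\log m)$ rounds the total time is $\tilde O((m+n^2)t)$. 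Correctness of the queries follows from a union bound over at most $m\log m$ queries, each failing with probability $O(n/p)+\delta \le O(m^{-3})$ because $p>m^4$.

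Removing all of $T$ in step 5 would spend $2n$ slots of $S$ per round, giving $O(n\log m)$ in total, which violates $|S|\le 16n$. The selection must therefore be budgeted more carefully. In each round, only the rows of $T$ that are independent \emph{of the previous $S$} count against the budget. More precisely, we add to $S$ just a maximal independent subset of $T$ relative to the current span. Equivalently, we let the dependence tester span $S\cup T$ and account for the growth of $S$ through rank, since $S$ can contain at most $n$ linearly independent vectors.

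\textbf{Progress analysis (main obstacle).} The key step is to show that the number of surviving vectors, those independent of $\operatorname{span}(S\cup T)$, is at most $k/2$ in expectation. We use an exchange/sampling argument. Order the vectors of $T$ as a uniformly random sequence drawn from $R$. Consider a uniformly random additional $x \in R\setminus T$. By symmetry, the probability that $x$ is independent of the span of the other sampled vectors equals the probability that the $(|T|+1)$-th vector of a random sequence increases the rank. Since the rank can increase at most $n$ times, averaging over positions bounds this probability by $n/(|T|+1) \le 1/2$. Hence the expected number of survivors is at most $k/2$, and Markov's inequality plus repetition gives geometric decay with high probability.

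This argument already shows that the ``random subset'' idea works without any win/win case split. The remaining difficulty is keeping $|S|\le 16n$. To handle it, we do not add all of $T$ to $S$ each round. Instead, we add the sample only once $R$ is small, and during the earlier rounds we add to $S$ just the vectors of $T$ that raise the rank. We detect these greedily, by inserting them one at a time into an incrementally rebuilt tester, or, more cheaply, by computing the rank of the prefix of $T$ via \cref{lem:sparse-rank} and binary searching.

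Because the rank increases at most $n$ times overall, the rank-increasing selections contribute at most $n$ vectors in total. The final round contributes at most $16n$ (after adjusting constants, for example by stopping when $k\le 8n$). This bookkeeping of the total rank increase across rounds, while keeping each round within $\tilde O(n^2t + kt)$ time, is where I expect the main technical care to be needed.
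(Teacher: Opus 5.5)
Your progress bound is correct, but the proof does not go through as proposed. The gap is in the bookkeeping, and your own design creates it. Because you remove $T$ from $R$ and commit it to $S$, the bound $|S|\le 16n$ forces you to keep only the vectors of $T$ that raise the rank relative to $S$. That means computing a maximal independent subset of $O(n)$ sparse vectors, which is exactly the basis computation this construction is meant to avoid (the known methods cost about $O(\nnz(A)+\rank(A)^{\omega})$).

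Neither implementation you suggest fits in $\tilde O((m+n^2)t)$:
\begin{itemize}
\item \cref{lem:lin-dep-testing} is a static structure, so rebuilding it after each insertion costs $\tilde O(n^2t)$ per vector of $T$.
\item Locating each rank increase by binary search over prefix ranks costs $O(\log n)$ calls to \cref{lem:sparse-rank}, each costing $\tilde O(n\cdot nt)$.
\end{itemize}
The rank can increase up to $n$ times in total, and rank close to $n$ is precisely the regime of the 3SUM application. So both options give $\tilde O(n^3t)$. You flag this step as unresolved yourself, and as stated it fails.

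The missing idea is that nothing needs to be committed to $S$ during the rounds. Deleting a vector that lies in the span of other vectors still in the pool does not change the span of the pool. So:
\begin{itemize}
\item keep $T$ inside $R$;
\item delete only those $j\in R\setminus T$ that the tester reports as dependent on $\{a_i\}_{i\in T}$;
\item sample a fresh $T$ from the shrunken pool in the next round;
\item output $S:=R$ once $|R|\le 16n$.
\end{itemize}
Then $|S|\le 16n$ is automatic and no rank bookkeeping is needed.

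Your progress argument transfers verbatim. The set $U=T\cup\{x\}$ is a uniformly random $(|T|+1)$-subset of $R$, and $x$ is a uniform element of it. At most $\rank(U)\le n$ elements of $U$ lie outside $\mathrm{span}$ of the others, since each such element belongs to every basis of $U$. Hence $x$ survives with probability at most $n/(|T|+1)<1/2$. For $k>16n$ and $|T|=2n$, the expected new pool size is at most $2n+k/2\le 5k/8$. Markov's inequality then gives a constant-factor shrink with constant probability, and hence $O(\log m)$ rounds with high probability.

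This is the paper's algorithm, which uses $|T|=4n$. Your backwards-analysis bound is a cleaner replacement for the paper's two-case argument via the stopping time $\tau$.
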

\begin{proof}
The algorithm is as follows. If $m \leq 16n$, then select all vectors. Otherwise, $m > 16n$. In this case our goal is to \emph{remove} many vectors. Note that a vector can always be safely removed if it is linearly dependent from the other remaining ones.

Specifically, select $T \subseteq [m]$ uniformly at random of size $4n$. Preprocess these vectors by \cref{lem:lin-dep-testing} (with parameter $\delta = m_0^{-10}$, where $m_0$ denotes the original input size). Then scan over all other indices $j \in [m] \setminus T$, and test whether $a_j$ is linearly dependent from $(a_i)_{i \in T}$. If it is linearly dependent, we delete $a_j$. Once this process stops, we recurse on the set of non-deleted vectors.

It is clear that the algorithm is correct (provided it terminates). It remains to argue that the algorithm makes progress. To argue this, analyze the process of sampling $T$ in steps. Initially, $T$ is empty. After each step we select an index from $[m] \setminus T$ uniformly at random and include it into $T$. Before sampling the $k$-th index, let
\begin{equation*}
    N = \abs*{\set*{j \in [m] \setminus T : \text{$a_j$ is linearly independent from $\set{a_i}_{i \in T}$}}}.
\end{equation*}
There are two cases:
\begin{enumerate}[label=(\roman*)]
    \item \emph{If $N > m/2$:} Then with probability more than $1/2$ the rank of the selected vectors $(a_i)_{i \in T}$ increases by including a uniformly random $j \in [m] \setminus T$ into $T$. Let $\tau$ denote the first time that $N \leq m/2$. In each step before $\tau$, the rank of the selected rows increases with (conditional) probability at least $1/2$. Since the rank cannot exceed $n$, we must have $\Ex[\tau] \leq 2n$. By Markov's inequality it follows that $\Pr(\tau \leq 4n) \geq 1/2$. In other words, with probability at least $1/2$ there is a step in which we reach case~(ii):
    \item \emph{If $N \leq m/2$:} Then at least $m - |T| - N \geq m / 2 - |T|$ of the remaining vectors are linearly dependent from $(a_i)_{i \in T}$. Among those, the algorithm may select up to $4n - |T|$, but the others will certainly be deleted. Thus, the number of deleted vectors is at least
    \begin{equation*}
        \frac{m}{2} - |T| - (4n - |T|) \geq \frac{m}{2} - \frac{m}{4} \geq \frac{m}{4}.
    \end{equation*}
\end{enumerate}

Putting both arguments together, the algorithm deletes at least $\Omega(m)$ vectors with probability at least $\Omega(1)$. It follows that the recursion depth is at most $O(\log m)$ (in expectation, and also with high probability by a simple Chernoff bound).

Finally, we analyze the running time and error probability. The set $T$ can be selected in time $O(m)$. Preprocessing the selected vectors by \cref{lem:lin-dep-testing} takes time $\tilde O(n^2 t)$. Each of the $m$ queries takes time $O(t \log m)$, so the total time per recursive call is $\tilde O((m + n^2) t)$. As argued before, the total overhead due to the recursion is at most $O(\log m)$. Moreover, each of the $O(m \log m)$ queries fails with probability at most $O(n / p) + \delta$. By a union bound, the total error probability is thus $\tilde O(n m / p + m^{-9})$.
\end{proof}

\begin{proof}[Proof of \cref{thm:sparse-rect-eq-sys}]
To solve the given equation system $A x = b$, we first run \cref{lem:sparse-approx-basis} to select a submatrix $A'$ of $A$ consisting of $O(n)$ rows but with the same rank. Let $b'$ be the restriction of $b$ to the same set of rows. We then compute a uniformly random solution to the equation system $A' x = b'$ by \cref{lem:sparse-eq-sys}; if no such solution exists we report so. We then test whether $A x = b$. If yes, then the equation system is consistent and we report $x$. If no, then the equation system is inconsistent and there cannot be a solution.

The running time of \cref{lem:sparse-approx-basis} is $\tilde O((m + n^2) t)$ and the running time of \cref{lem:sparse-eq-sys} is $\tilde O(n^2 t)$. All other steps can be implemented in time $O((m + n) t)$.

\end{proof}

\section{Algorithm for Unknown \texorpdfstring{$C$}{C}}\label{sec:unknownC}

We show how the proof of \Cref{thm:reduct-real-to-int} can be adapted to solve the variant of preprocessed 3SUM where one of the three sets is not known during the preprocessing phase.
All known algorithms for integer preprocessed 3SUM also give bounds for this setting~\cite{ChanL15,ChanWX23,KasliwalPS25,KirkpatrickKMW26,CarmelKK26}.
Note that this task does not make sense in the single-set version of 3SUM, so we have to work with multiple sets in this section.
To make it easier to apply our linear algebra toolkit, we work with a slightly odd-looking, two-set problem definition:

\begin{definition}[Preprocessed 3SUM with Unknown $C$]
    Preprocess a given set of numbers $A$ of size $n$, to support the following queries:
    Given $A'\subseteq A$ and any set $C$ of size $n$, decide if there is a solution $(a,a',c) \in A'\times A' \times C$ to $a + a' = c$.
\end{definition}

It is again simple to convince yourself that this two-set definition is equivalent to the standard three-set definition via standard reductions~\cite{DudekGS20}.

In this section, we prove the following reduction.
\begin{theorem}\label{thm:unknown-c}
    If Preprocessed 3SUM with Unknown $C$ with integer inputs (over the word RAM model) can be solved in preprocessing time $P(n)$ and query time $Q(n)$, then Preprocessed 3SUM with Unknown $C$ with real inputs (over the real RAM model\footnote{Note that this algorithm works in a real RAM model that is as restricted as the algorithm of \Cref{thm:reduct-real-to-int}, except for also allowing comparisons involving four input reals, that is, of the form ``$a + b \le c + d$?''.}) can be solved in preprocessing time $\tilde O(P(n) + n^2)$ and query time $\tilde O(Q(n))$.
\end{theorem}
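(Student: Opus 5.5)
The plan is to strengthen the integer encoding from \cref{thm:reduct-real-to-int} so that it preserves equalities between \emph{pairwise sums}, not only 3SUM relations. A query element $c \in C$ then only needs to be matched against the precomputed pairwise sums of $A$. Concretely, write $A = \set{a_1, \dots, a_n}$. We want $x \in \Field_p^n$ such that, with high probability, for all $i, j, k, l$
\[
a_i + a_j = a_k + a_l \iff x_i + x_j = x_k + x_l \pmod p .
\]

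\emph{Preprocessing.} First sort all $n^2$ pairwise sums $a_i + a_j$ in time $O(n^2 \log n)$. This only needs comparisons of the form ``$a + b \le c + d$?'', as allowed in the footnote. This groups the pairs $(i,j)$ into classes of equal sum.

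Next build a sparse system over $\Field_p$, with $p$ a random prime in $[n^{10}, 2n^{10}]$. For each class, list its pairs $(i_1,j_1), (i_2,j_2), \dots$ in sorted order and add only the consecutive-chain rows $e_{i_s} + e_{j_s} - e_{i_{s+1}} - e_{j_{s+1}}$. This gives at most $n^2$ rows, each $4$-sparse, and they span every equality $e_i + e_j - e_k - e_l$ among equal-sum pairs. Call \cref{thm:sparse-rect-eq-sys} with $m = O(n^2)$ and $t = 4$ to get a uniformly random kernel element $x$ in time $\tilde O(n^2)$.

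For correctness we repeat the claim in the proof of \cref{thm:reduct-real-to-int}.
\begin{itemize}
\item Apply \cref{thm:ft} with $c = 2$ to get $v$ with $a_i + a_j = a_k + a_l \iff v_i + v_j = v_k + v_l$. A random prime $p$ preserves all these nonzero differences with probability $1 - \tilde O(n^{-3})$ after a union bound over $n^4$ tuples, since each difference has only $O(n^3)$ prime divisors.
\item Since $v \in \ker M$, a uniform kernel element is $y + \alpha v$ with $\alpha$ uniform. For every non-equal tuple, $x_i + x_j - x_k - x_l$ is therefore uniform in $\Field_p$. A union bound over $n^4$ tuples gives failure probability $O(n^4/p) = O(n^{-6})$.
\end{itemize}

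Now view the $x_i$ as integers in $\set{0, \dots, p-1}$ and store, for each sum class, one representative pair. Finally, preprocess the integer set $X = \set{x_1, \dots, x_n}$ with the integer algorithm for Unknown $C$ in time $P(n)$.

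\emph{Query.} Given $A' \subseteq A$ and $C$, look up each $c \in C$ by binary search among the sorted pairwise sums, in $O(\log n)$ comparisons ``$a_i + a_j \le c$?''.
\begin{itemize}
\item If $c$ equals some class sum with representative $(i,j)$, set $f(c) := x_i + x_j$ as an \emph{integer} in $[0, 2p-2]$.
\item Otherwise discard $c$, since it can never be a solution.
\end{itemize}
Then query the integer structure with $X' = \set{x_i : a_i \in A'}$ and $f(C)$.

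Correctness follows in both directions.
\begin{itemize}
\item If $a_k + a_l = c$ with $a_k, a_l \in A'$, then $(k,l)$ lies in the class of $(i,j)$. Hence $x_k + x_l \equiv x_i + x_j$, and both integer sums lie in $[0, 2p-2]$. They could still differ by exactly $p$ as integers. To handle this, run two integer instances (sums below $p$ and sums at least $p$), or equivalently add $f(c)$ and $f(c) \pm p$ to the query set. This doubles the query cost at most.
\item Conversely, if $x_k + x_l \equiv f(c) \pmod p$, then $x_k + x_l \equiv x_i + x_j$, so $a_k + a_l = a_i + a_j = c$ by the claim.
\end{itemize}
The total query time is $O(n \log n) + O(Q(n)) = \tilde O(Q(n))$, using $Q(n) \ge n$.

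The main obstacle I expect is the system size. A naive system would need one row per equal-sum $4$-tuple, which can be $\Theta(n^4)$ rows. The chain trick reduces this to $O(n^2)$ rows and still spans exactly the right equalities, which is what keeps the call to \cref{thm:sparse-rect-eq-sys} within $\tilde O(n^2)$. A secondary point is handling the mod-$p$ wraparound consistently between $f(c)$ and the pair sums $x_k + x_l$.
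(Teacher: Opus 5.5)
Your proposal is correct and follows essentially the same route as the paper: sort all pairwise sums, add a $4$-sparse row only for consecutive equal-sum pairs (at most $n^2$ rows, with all other equalities implied by telescoping), draw a random kernel vector via \cref{thm:sparse-rect-eq-sys}, prove the four-term encoding claim via Frank--Tardos plus the $y+\alpha v$ argument, and map each query element $c$ by binary search to $x_i+x_j$ for a matching pair $(i,j)$. Querying both lifts of $f(c)$ to handle the mod-$p$ wraparound is just a concrete implementation of the paper's reduction to the $\Field_p$ variant.
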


Let us comment on the main issues we face when adapting our algorithm to the setting with unknown $C$ and how we circumvent them.
Recall that our reduction works by simply writing down all 3SUM-solutions that the real instance satisfies, followed by solving a sparse linear equation system of size $n^2 \times n$ with up to $O(n^2)$ nonzero entries.
However, without knowing one of the three sets, we a priori cannot write down any 3SUM-equation that should be satisfied.
Instead, a natural idea is to write down four-term equations $a_1 + a_2 = a_3 + a_4$ over $A$, which are enough to pinpoint all necessary information about $A$ in the instance.
The remaining issue is that the number of these equations can be up to $n^3$ (since three numbers determine the fourth), leading to a blow-up in the size of the equation system.

We resolve this issue by constructing a smaller equation system that still encodes the required additive information as follows: We make sure that each pair $(a_1, a_2) \in A^2$ is involved in at most two equations, leading to again only $O(n^2)$ many rows in our matrix. We choose the equations in such a way that all missing equalities will be implied by transitivity.
Together, with the state-of-the-art algorithm for integers due to Kasliwal, Polak, and Sharma~\cite{KasliwalPS25}, we then arrive at an algorithm for reals with the same guarantees.

\corunknownC
\begin{proof}[Proof of \Cref{thm:unknown-c}]
    Like in the proof of \Cref{thm:reduct-real-to-int}, we can assume without loss of generality that the given algorithm for integer inputs solves the $\Field_p$ version of Preprocessed 3SUM with Unknown $C$ (where $p$ is a prime and solutions are modulo $p$).

    We first describe the preprocessing phase, where we are given a set $A \subseteq \Real$.
    We denote the entries of $A$ by $a_1, \ldots, a_n$ arbitrarily and maintain a set $S \subseteq [n]^4$, which is initially empty.
    We start by sorting the set of all $n^2$ pairs $(a_i, a_j) \in A^2$ by their sum in non-decreasing order in time $O(n^2 \log n)$.
    Then, we consider all $n^2-1$ choices of consecutive pairs $(a_i, a_j)$ and $(a_k, a_\ell)$ in this ordering.
    If $a_i + a_j = a_k + a_\ell$, we insert $(i, j, k, \ell)$ into $S$.
    
    Next, sample a uniformly random prime $p$ in the range $[n^{10}, 2n^{10}]$, and consider the matrix $M$ of size $|S| \times n$ over $\Field_p$, where the row corresponding to $(i, j, k, \ell) \in S$ is now exactly $e_i + e_j - e_k - e_\ell$. Compute a random solution $x \in \Field_p^n$ to $M x = 0$ by one call to \cref{thm:sparse-rect-eq-sys} and give the set of entries of $x$ to the integer algorithm to preprocess.

    Now, in the query phase, we receive $A' \subseteq A$ as well as another set $C \subseteq \Real$ which we turn into a suitable query set $C' \subseteq \Field_p$ for the integer algorithm.
    For each $c \in C$, we start by binary searching the prepared sorted set of pairs to check if there is a pair $(a_i, a_j) \in A^2$ with $a_i + a_j = c$.
    If such a pair exists, we add $x_i + x_j \in \Field_p$ to $C'$.
    Otherwise, $c$ cannot be part of a solution, regardless of $A'$, so we can safely discard it.
    Finally, we hand $C'$ to the integer algorithm and return the query output.
    The correctness of the algorithm immediately follows from the following claim since $a_k + a_\ell = c$ for some $a_k, a_\ell \in A'$ is then equivalent to $a_k + a_\ell = a_i + a_j$.

    \begin{claim}
        With high probability, $a_i + a_j = a_k + a_\ell$ if and only if $x_i + x_j = x_k + x_\ell$ for all $i, j, k, \ell \in [n]$.
    \end{claim}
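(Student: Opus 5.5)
We follow the proof of the analogous claim in \Cref{thm:reduct-real-to-int}, with one extra ingredient: a transitivity argument. It shows that the sparse set $S$ of consecutive equalities generates all four-term equalities.

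\textbf{Step 1: forward direction.} Suppose $a_i + a_j = a_k + a_\ell$. In the sorted order of pairs by sum, the pairs $(a_i,a_j)$ and $(a_k,a_\ell)$ lie in the same block of pairs with equal sum. Consecutive pairs inside a block are joined by rows of $S$. Summing the telescoping chain of rows $e_{i'} + e_{j'} - e_{k'} - e_{\ell'}$ between the two pairs shows that $e_i + e_j - e_k - e_\ell$ lies in the row space of $M$. Every $x \in \ker M$ is orthogonal to this row space, so $x_i + x_j = x_k + x_\ell$ holds deterministically.

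\textbf{Step 2: reverse direction, via a witness vector.} Apply Frank--Tardos (\Cref{thm:ft}) to $a$ with $c = 2$, so that the test vectors $x = \pm(e_i + e_j - e_k - e_\ell)$ have entries in $\set{-2,\dots,2}$. This yields $v \in \Int^n$ with $\norm{v}_\infty \le \exp(O(n^3))$ such that $a_i + a_j = a_k + a_\ell$ if and only if $v_i + v_j = v_k + v_\ell$. Reduce $v$ modulo the random prime $p \in [n^{10}, 2n^{10}]$. Each nonzero difference $v_i + v_j - v_k - v_\ell$ has absolute value $\exp(O(n^3))$, so it has only $O(n^3)$ prime divisors. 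By the Prime Number Theorem, $p$ divides it with probability $\tilde O(n^{-7})$. A union bound over the $n^4$ quadruples gives failure probability $\tilde O(n^{-3})$. Hence with high probability $v$, viewed in $\Field_p^n$, satisfies $Mv = 0$, because every row of $M$ is a true equality. Moreover, every non-equality $a_i + a_j \ne a_k + a_\ell$ gives $v_i + v_j - v_k - v_\ell \ne 0$ in $\Field_p$.

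\textbf{Step 3: randomness of the kernel sample.} As before, a uniform element of $\ker M$ can be written as $y + \alpha v$ with $\alpha \in \Field_p$ uniform and independent of $y$. For a non-equality, $x_i + x_j - x_k - x_\ell$ is then uniform on $\Field_p$, so it vanishes with probability $1/p \le n^{-10}$. A union bound over at most $n^4$ quadruples gives error $O(n^{-6})$. We also add the failure probability of \Cref{thm:sparse-rect-eq-sys}; its condition $p > m^4$ holds since $m \le n^2$.

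\textbf{Main obstacle.} The only new point relative to the earlier claim is Step 1. One must check that sorting by sum and linking consecutive equal-sum pairs really places every equal-sum pair in one connected chain, so that transitivity gives row-space membership. Ties are contiguous in a sorted order, so this holds. The rest is routine.
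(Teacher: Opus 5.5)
Your proof is correct and follows the paper's argument: a Frank--Tardos witness $v$, reduction modulo a random prime $p$, and the decomposition $x = y + \alpha v$ of a uniform kernel element, which handles the non-equalities. Your Step~1 is the telescoping argument showing that every true equality $e_i + e_j - e_k - e_\ell$ lies in the row space of $M$; the paper's proof of this claim leaves that direction implicit, mentioning only in the overview that the missing equalities follow by transitivity, so spelling it out makes your version more complete.
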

    \begin{proof}
        Apply the Frank--Tardos theorem (\Cref{thm:ft}) on $a = (a_1, \dots, a_n) \in \Real^n$ to obtain integers $v \in \Int^n$ bounded by $\norm{v}_\infty \leq \exp(O(n^3))$ such that $a_i + a_j = a_k + a_\ell$ if and only if $v_i + v_j = v_k + v_\ell$. We claim that, with high probability, this is further equivalent to $v_i + v_j \equiv v_k + v_\ell \mod{\,p}$. Indeed, the only failure event is when $v_i + v_j \neq v_k + v_\ell$ and $p$ divides $v_i + v_j - v_k - v_\ell$. This number has at most $\log (2 \norm{v}_\infty) \leq O(n^3)$ prime divisors, thus the probability of selecting $p \in [n^{10}, 2n^{10}]$ among these $O(n^3)$ bad choices is at most $\tilde O(n^{-7})$ by the Prime Number Theorem. By a union bound over the $n^4$ choices of $(i, j, k, \ell)$, the total error probability is $\tilde O(n^{-3})$.
        
        Thus, view $v$ as an element of $\Field_p^n$. By construction we have $M v = 0$ (now viewed over $\Field_p$). Sampling a uniformly random kernel element $x \in \ker M$ can therefore be equivalently viewed as sampling $y \in \ker M$ and a uniformly random scalar $\alpha \in \Field_p$, and setting $x \coloneqq y + \alpha v$. But then, for any quadruple $(i, j, k, \ell)$ with $a_i + a_j \neq a_k + a_\ell$ we have that $x_i + x_j - x_k - x_\ell = (y_i + y_j - y_k - y_\ell) + \alpha (v_i + v_j - v_k - v_\ell)$. Since $v_i + v_j - v_k - v_\ell \neq 0$ and since $\alpha \in \Field_p$ is sampled uniformly and independently, it follows that also $x_i + x_j - x_k - x_\ell$ is uniformly distributed in $\Field_p$. The probability of hitting zero is thus $1/p$, and by a union bound over all non-solution quadruples the total error probability is $O(n^{-6})$. 
    \end{proof}
    
    We finally analyze the running time. Preparing the sorted set of pairs as well as $S$ takes time $\tilde O(n^2)$. The call to \cref{thm:sparse-rect-eq-sys} is with parameters $m \le n^2 - 1$ and $t = 4$, and thus takes time $\tilde O(n^2)$, too. Finally, we preprocess one instance of Preprocessed 3SUM with Unknown $C$ over $\Field_p$ (or two integer instances) in time $O(P(n))$. In the query phase, we perform $n$ binary searches over a set of size $n^2$ and then issue two queries to respective two integer instances taking time $O(Q(n))$. The total query time is therefore $O(Q(n) + n\log n) = \tilde O(Q(n))$, as claimed.
\end{proof}

\section*{AI Disclosure}\label{sec:ai}
In an early stage of this project, the authors asked ChatGPT 5.6 Sol for a weaker but more efficient real-to-integer hash function than provided by the Frank--Tardos theorem that can exactly maintain 3SUM solutions.
The LLM suggested the general approach based on linear algebra outlined below \Cref{obs:ai} without running time analysis.
The authors then developed the remaining tools based on sparse linear algebra and the full algorithm without AI tools.

Finally, AI tools (ChatGPT 6 Astra) have been used to assist in drafting parts of the introduction and review the technical sections. The authors assume full responsibility.

\bibliographystyle{alphaurl}
\bibliography{main}

\appendix
\section{Preprocessed 3SUM with Subquadratic Preprocessing Time}
\label{sec:subquadratic}
In this section, we demonstrate how the recent subquadratic-time algorithm for 3SUM~\cite{AlmanV26} can be turned into an algorithm for Preprocessed 3SUM with subquadratic preprocessing time and $\tilde O(n^{3/2})$ query time:

\begin{theorem} \label{thm:preprocessed-subquadratic}
Preprocessed 3SUM with integer inputs (over the word RAM model) can be solved with preprocessing time $O(n^{2-\epsilon})$ and query time $\tilde O(n^{1.5})$, for some absolute constant $\epsilon > 0$.
\end{theorem}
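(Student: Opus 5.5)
We will keep the query algorithm of Kasliwal, Polak, and Sharma~\cite{KasliwalPS25} essentially unchanged and only replace the quadratic parts of its preprocessing. We first isolate what that preprocessing computes. Roughly, it fixes a random hash (a random modulus or a random linear map modulo a prime), partitions $A$ into buckets, and for every relevant pair of buckets stores which residues/pairs give 3SUM solutions, so that a query can be answered in $\tilde O(n^{3/2})$ time by combining these tables with the query set. The only step that takes $\Theta(n^2)$ time is enumerating the (potential) solutions $a+b=c$, or more precisely all pairs $(a,b)$ whose hashed sum lands on a hashed element of $A$. Everything else (sorting, bucketing, building lookup tables of size $\tilde O(n^{3/2})$) is already subquadratic or can be made so.

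The plan is to split this solution set into a structured part and an unstructured part, following the standard additive-combinatorics route (Balog--Szemer\'edi--Gowers) used in Chan--Lewenstein and in the analysis of~\cite{AlmanV26}. Fix a threshold $K = n^{\gamma}$ for a small $\gamma>0$. Define the \emph{unstructured} part as the solution triples that lie inside no large, additively structured subset. We would use the algorithmic BSG decomposition in the form implicit in~\cite{AlmanV26}, whose output is the point of that paper. It covers all but $n^{2}/K$ solution pairs by a family of subsets $A_i, B_i$ of small doubling. Their total size is $\sum_i |A_i|+|B_i| = O(nK^{O(1)})$, and the decomposition can be computed in time $O(n^{2-\epsilon_0})$. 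Pairs inside a structured block $A_i\times B_i$ are never listed explicitly. Instead, small doubling lets us compute $A_i+B_i$ with multiplicities by FFT on a compressed universe (via Freiman isomorphism/hashing) in time $\tilde O(|A_i+B_i|)=\tilde O(K^{O(1)}(|A_i|+|B_i|))$, which gives the blocks' contribution to the lookup tables. The remaining $n^2/K$ uncovered pairs are listed directly by the subquadratic 3SUM enumeration of~\cite{AlmanV26}, applied to the leftover instance.

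The query side then has to be adjusted. For a query $A'$, a solution either lies in some structured block or in the explicit leftover list. For structured blocks we would redo the~\cite{KasliwalPS25} query argument per block: a query touching block $i$ with $|A'\cap A_i|=s_i$ can be answered in time $\tilde O(K^{O(1)} s_i^{3/2})$ or better. Summing gives $\tilde O(K^{O(1)} n^{3/2})$, and choosing $K=n^{\gamma}$ small trades a polynomial factor in query time against the preprocessing saving. To keep the query time at $\tilde O(n^{3/2})$ exactly, we would instead handle the blocks by the original~\cite{KasliwalPS25} tables restricted to each block; those tables only need the block sumsets, and those we computed cheaply. The leftover list is stored in the~\cite{KasliwalPS25} format directly.

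The main obstacle is this interface step. The~\cite{KasliwalPS25} data structure must be built from an \emph{implicit} representation of the solution set (blocks given by sumsets, plus an explicit sparse list) rather than from a full enumeration, and we must not lose polynomial factors in query time. We would first try to verify that their preprocessing only ever needs, for each hash bucket pair, counts or indicator sets of solutions. These are linear functionals of the solution set, so they decompose additively over blocks and can be aggregated by convolutions. If this fails, we would accept an $n^{O(\gamma)}$ loss and absorb it by rebalancing their bucket parameter, which still leaves $\epsilon>0$.
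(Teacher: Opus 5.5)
There is a genuine gap, located exactly at the interface you flag, and it comes from not pinning down what \cite{KasliwalPS25} stores. Their structure has three parts:
\begin{itemize}
\item a random prime $p\in[n^{3/2},2n^{3/2}]$;
\item an explicit list of the $\tilde O(n^{3/2})$ \emph{false positives}, i.e.\ triples with $a+b\neq c$ but $a+b\equiv c \pmod p$;
\item per query, an FFT count of the modular solutions in $A'$, minus the false positives lying inside $A'$.
\end{itemize}
So the preprocessing bottleneck is listing false positives. Any part you do not list must have its false positives \emph{counted at query time}, restricted to $A'$.

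Your plan does not achieve this. Computing $A_i+B_i$ with multiplicities during preprocessing says nothing about $(A'\cap A_i)+(A'\cap B_i)$. Your query-time options either cost $K^{O(1)}n^{3/2}$ or are left unspecified. Rebalancing cannot absorb a multiplicative $n^{O(\gamma)}$, since the $n^{3/2}$ bound comes from balancing $p$ against $n^3/p$.

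The pair-covering BSG decomposition causes a further problem. Its rectangles $A_i\times B_i$ need not be disjoint, so per-block counts double count. The uncovered set $A\times A\setminus\bigcup_i A_i\times B_i$ is not a product set either. Hence there is no ``leftover instance'' to which the listing algorithm derived from \cite{AlmanV26} could be applied. The only available instance is $(A,A,A)$, which may have $\Theta(n^2)$ solutions.

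What is missing is (i) a decomposition of $A\times A$ into \emph{disjoint product pieces}, and (ii) an \emph{additive} rather than multiplicative query overhead for the structured pieces. The paper does this in four steps.
\begin{itemize}
\item It partitions elements, $A=A_1\sqcup\dots\sqcup A_k\sqcup R$, by the cheap energy version of BSG in time $\tilde O(k^7n)$. Each $|A_i+A_i|$ is small, and only $\tilde O(n^2/k^{1/8})$ solutions have $a\in R$.
\item For each $i$, it splits the partner set $A=H_i\sqcup L_i$ into elements that are heavy or light with respect to $A_i$.
\item The pieces $R\times A$ and $A_i\times L_i$ are genuine 3SUM instances with few true solutions. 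All their modular solutions, and hence their false positives, can therefore be listed in subquadratic time.
\item For $A_i\times H_i$, the false positives are counted per query as (modular count) $-$ (integer count). This uses output-sensitive sparse convolution in time $\tilde O(|A_i+H_i|+n)$.
\end{itemize}
Heavy elements satisfy $H_i\subseteq C-A_i$ for a random sample $C$ of size $\tilde O(t)$. Pl\"unnecke--Ruzsa then gives $|A_i+H_i|\le|A_i-A_i|\,|C|=\tilde O(n^{1+O(\gamma)+\beta})$. Summed over the $k=n^{\gamma}$ parts, this stays well below $n^{3/2}$, e.g.\ for $\beta=1/4$ and tiny $\gamma$.

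Your remark that the needed quantities are linear functionals aggregable by convolution points in this direction. It only works, however, with disjoint product pieces and with the convolutions performed at query time.
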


We have made no attempt to optimize the constants in this section. Besides~\cite{AlmanV26}, the proof of \cref{thm:preprocessed-subquadratic} requires ideas from two further sources: the state-of-the-art algorithm for integer Preprocessed 3SUM~\cite{KasliwalPS25}, and a line of work on algorithms for structured 3SUM instances~\cite{ChanL15,AbboudBF23,JinX23,FischerJX25} based on the Balog--Szemerédi--Gowers (BSG) theorem from additive combinatorics~\cite{BalogS94,Gowers01} and sparse convolutions~\cite{ColeH02,BringmannFN22}. We start by gathering the required tools.

The following theorem states an algorithmic version of the BSG theorem; see e.g.~\cite[Appendix B]{AbboudBF23} for a proof. Here, for an integer set $A$, the \emph{additive energy} is $E(A) := \abs{\set{(a,b,c,d) \in A^4 : a+b=c+d}}$.

\begin{theorem}[Algorithmic BSG Theorem] \label{thm:bsg}
Let $A$ be an integer set with $E(A) \geq |A|^3 / K$. Then there is a subset $A' \subseteq A$ such that:
\begin{enumerate}[label=(\roman*)]
    \item $|A'| \geq \Omega(|A| / K^2)$, and
    \item $|A' + A'| \leq O(K^{26} |A|)$.
\end{enumerate}
The set $A'$ can be computed in time $\tilde O(K^{12} |A|)$ by a randomized algorithm.
\end{theorem}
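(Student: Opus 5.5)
The statement to prove is the algorithmic BSG theorem (\cref{thm:bsg}). I would follow the standard graph-theoretic proof of Balog--Szemerédi--Gowers and make every step algorithmic by random sampling.

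\textbf{The popular-sum graph.} For $s$, let $r(s) = |\set{(a,b)\in A^2 : a+b=s}|$, so that $E(A)=\sum_s r(s)^2$. Call $s$ \emph{popular} if $r(s)\ge |A|/(2K)$. The sums with $r(s) < |A|/(2K)$ contribute at most
\[
\frac{|A|}{2K}\sum_s r(s) = \frac{|A|^3}{2K}
\]
to $E(A)$. Since $r(s)\le |A|$, the popular sums therefore cover at least $|A|^2/(2K)$ pairs. Let $G$ be the graph on $A$ whose edges are the pairs with a popular sum; it has density at least $1/(2K)$.

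\textbf{Dependent random choice.} Next I would apply the Gowers path-of-length-two lemma. Pick a random vertex $v$ and let $A_0 = N(v)$. In expectation $|A_0|\gtrsim |A|/K$. Call a pair in $A_0$ bad if it has fewer than $\epsilon |A|/K^{2}$ common neighbours. The expected number of bad pairs is a small fraction of $|A_0|^2$, so after deleting vertices lying in many bad pairs we obtain $A'\subseteq A_0$ with $|A'| \ge \Omega(|A|/K^2)$. In $A'$, most pairs (or all pairs, after a second refinement that passes to paths of length three) are joined by $\Omega(|A|^2/\mathrm{poly}(K))$ paths of length three through popular edges.

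\textbf{Bounding $|A'+A'|$.} For $x,y \in A'$ and each path $x-u-w-y$ with popular sums $s_1=x+u$, $s_2=u+w$, $s_3=w+y$, we have $x+y = s_1 - s_2 + s_3$. The number of popular sums is at most $2K|A|$. The triples $(s_1,s_2,s_3)$ must be counted with multiplicity, since each triple determines the path through the representations. Double counting gives
\[
|A'+A'|\cdot \frac{|A|^2}{\mathrm{poly}(K)} \le (2K|A|)^3,
\]
hence $|A'+A'|\le \mathrm{poly}(K)\,|A|$. Tracking the exponents should land at $K^{26}$; since the theorem claims a specific polynomial, I would take the constants from the cited appendix rather than optimize them.

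\textbf{Algorithm.} We cannot write down $G$ explicitly. Instead I would estimate $r(a+b)$ for a queried pair by sampling $\tilde O(K)$ random $a''\in A$ and checking $a+b-a''\in A$ with a hash set, which answers edge queries with high probability. For $v$ chosen randomly, computing $N(v)$ costs $|A|\cdot\tilde O(K)$. Estimating the common-neighbourhood sizes for the pairs in $A_0$ would be quadratic, so I would sample as well: take $\tilde O(\mathrm{poly}(K))$ random pairs/vertices per candidate $x\in A_0$ to estimate how many bad partners $x$ has. This gives $\tilde O(K^{c}|A|)$ total, and the remaining work is to book-keep $c=12$.

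\textbf{Main obstacle.} The hardest part is robustness: the sampled estimates only approximate popularity and common-neighbourhood counts. I would handle this by using threshold gaps (a factor-2 slack), so that misclassified items cost only constant factors in the bounds.
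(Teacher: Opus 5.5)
\textbf{Gap: the step that yields a single set $A'$ with many length-three paths.} The paper gives no proof of this statement; it only cites \cite[Appendix~B]{AbboudBF23}. Your popular-sums graph plus sampling framework is the right one. The step that fails is the one producing a \emph{single} set $A'$ in which every pair $x,y$ is joined by many paths of length three. This must include $x=y$, since $2x\in A'+A'$.

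\begin{itemize}
\item Dependent random choice with $A_0=N(v)$ gives pairs with many common neighbours, i.e.\ paths of length \emph{two}. These control differences, $x-y=(x+w)-(y+w)$, not sums.
\item The standard upgrade to length-three paths (Gowers, Sudakov--Szemer\'edi--Vu) is two-sided. Besides $A_1\subseteq A_0$ it produces a second set $B_1$ of vertices with many neighbours in $A_0$. It only guarantees many length-three paths for pairs in $A_1\times B_1$, so it bounds $|A_1+B_1|$, not $|A_1+A_1|$.
\item Your argument uses nothing about $G$ beyond its density. For a dense bipartite graph this fails outright: take all edges between halves $L$ and $R$, and $v\in L$, so $A_0=R$. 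Then there is no walk of odd length between any two vertices of $A_0$.
\end{itemize}
So, as written, your triple-counting bound on $|A'+A'|$ has nothing to apply to.

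\textbf{How to close it.} You need an additive-combinatorial step that turns a two-set or difference bound into a doubling bound. Two options:
\begin{itemize}
\item Keep both sides and bound $|A_1+B_1|\le\mathrm{poly}(K)|A|$ by your triple count. Then apply Ruzsa's triangle inequality $|B_1|\,|A_1-A_1|\le|A_1+B_1|^2$, followed by Pl\"unnecke--Ruzsa in the form $|A_1+A_1|\le|A_1-A_1|^2/|A_1|$.
\item Stay inside $N(v)$ and use even paths. Use length two, refined to length four so that \emph{all} pairs are covered, to get $|A'-A'|\le\mathrm{poly}(K)|A|$. Then apply Pl\"unnecke--Ruzsa.
\end{itemize}
The exponent $26=2\cdot 13$ in the statement is consistent with such a final squaring step.

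\textbf{The rest is sound.} Your sampling implementation fits easily within $\tilde O(K^{12}|A|)$, provided you use one fixed random sample to decide popularity so that edge queries are consistent, together with factor-two threshold gaps. One small correction to your count: no multiplicity is needed. After fixing one representative $(x,y)$ per element of $A'+A'$, distinct (sum, path) pairs map to distinct triples in $P^3$, so $|P|^3\le(2K|A|)^3$ bounds them directly.
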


\begin{corollary} \label{cor:bsg}
Let $A$ be an integer set, and let $k \geq 1$ be an integer. There is a partition $A = A_1 \sqcup \dots \sqcup A_k \sqcup R$ such that:
\begin{enumerate}[label=(\roman*)]
    \item $|A_i + A_i| \leq \tilde O(|A_i| k^{52})$,
    \item $\abs{\set{(a, b, c) \in R \times A \times A : a + b = c}} \leq \tilde O(|A|^2 / k^{1/8})$.
\end{enumerate}
The partition can be computed in time $\tilde O(k^7 |A|)$ by a randomized algorithm.
\end{corollary}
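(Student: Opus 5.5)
We would obtain the partition by repeatedly peeling off structured pieces with the Algorithmic BSG Theorem (\cref{thm:bsg}). We maintain a residual set $R$, initially $R = A$, and in each round test whether $R$ still participates in many 3SUM triples with $A$. The relevant count is $T(R) = \abs{\set{(a,b,c) \in R \times A \times A : a+b=c}}$. By Cauchy--Schwarz, $T(R)^2 \le |R| \cdot \abs{\set{(a,a',b,b') : a - a' = b - b' \text{-type coincidences}}}$, so a large $T(R)$ forces large additive energy. Concretely, writing $r(x) = \abs{\set{(b,c) \in A^2 : c - b = x}}$, we have $T(R) = \sum_{a \in R} r(a)$. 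Hence $T(R)^2 \le |R| \sum_a r(a)^2 \le |R| \cdot E(A)$.

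This only yields energy of $A$, not of $R$. So we instead apply a BSG-type argument to a suitable dense subset: the set $R^* \subseteq R$ of elements $a$ with $r(a) \ge |A|/k^{1/8}$, i.e., the popular differences. If $T(R) > |A|^2/k^{1/8}$ (up to logs), then the popular elements carry most of the mass, and $|R^*| \gtrsim$ the mass divided by $|A|$. The energy of such a set of popular differences is large, by the standard argument (as in \cite{ChanL15,AbboudBF23}) that a set of popular sums has energy at least $|R^*|^3/\mathrm{poly}(k)$. We then apply \cref{thm:bsg} with $K = \mathrm{poly}(k)$ to get $A_i \subseteq R^*$ with $|A_i + A_i| \le O(K^{26}|R^*|)$ and $|A_i| \ge |R^*|/K^2$. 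We remove $A_i$ from $R$ and repeat.

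The parameters must be chosen so that three things work out. First, $|A_i + A_i| \le \tilde O(k^{52}|A_i|)$, which is satisfied by taking $K \approx k$ up to the loss relating $|R^*|$ and $|A_i|$; we expect $K^{26}\cdot K^2$ to be the binding term, so we would set $K = k^{\Theta(1)}$ and tune exponents. Second, the number of rounds is at most $k$. Third, the total time is $\tilde O(K^{12}|A|\cdot\#\text{rounds})$. For the round count, each extracted $A_i$ has size $\gtrsim |A|/\mathrm{poly}(k)$, so the number of rounds is polynomial in $k$. If it exceeds $k$, we can merge or rescale $k$ by a polynomial. Equivalently, we prove the statement with $k' = k^{c}$ and reparametrize, which is why the exponents $52$ and $1/8$ look loose. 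Computing $r(a)$ for all $a$ exactly is the $n^2$ bottleneck, so we would instead estimate $r(a)$ by random sampling of $\tilde O(k^{O(1)})$ pairs per element. This costs $\tilde O(\mathrm{poly}(k)|A|)$ overall, and deciding membership in $R^*$ approximately suffices.

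The main obstacle is the energy lower bound for $R^*$ together with the size guarantee on $A_i$ relative to $|A|$ rather than $|R|$; this is what determines the exponents. We would first try the popular-sums energy lemma: if every $a \in R^*$ has $r(a) \ge |A|/L$, then $E(A) \ge \sum_{a\in R^*} r(a)^2$. A BSG application on the bipartite graph of pairs $(b,c)$ with $c - b \in R^*$ then gives the structured subset directly, so the bipartite version of BSG is the tool we would reach for. Finally, the time bound $\tilde O(k^7|A|)$ has to be matched by choosing $K$ a small power of $k$. Where the stated exponents do not fit, we would reparametrize $k$, accepting that the corollary's constants are whatever this bookkeeping produces.
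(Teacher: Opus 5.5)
\textbf{The gap.} The failing step is the one you flag as the main obstacle, and it is more than bookkeeping. The piece you peel off must be a subset of the residual set. So \cref{thm:bsg} needs a lower bound on the energy of $R$ (or $R^*$) itself. But the only inequalities you write down, $T(R)^2\le |R|\,E(A)$ and $E(A)\ge\sum_{a\in R^*}r(a)^2$, bound $E(A)$, which says nothing about $R^*$.

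The ``standard'' lemma you cite, that a set of popular differences has energy at least $|R^*|^3/\mathrm{poly}(k)$, is false as stated. For $A=\{1,\dots,n\}$, every $x$ with $|x|\le n/2$ is a popular difference, yet a Sidon subset of $\{1,\dots,n/2\}$ has energy only about $|R^*|^2$. The lemma holds in your situation only because $|R^*|\ge |A|/\mathrm{poly}(k)$, and deriving it from that needs exactly the inequality you are missing.

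The fallback via bipartite BSG on pairs $(b,c)$ with $c-b\in R^*$ does not give the piece ``directly'' either. It returns structured sets $B',C'\subseteq A$ on the $b$- and $c$-sides, not a subset of $R^*$. Converting this into a small-doubling subset of $R^*$ would need an edge-density version of BSG plus Pl\"unnecke--Ruzsa applied to $R^*\cap(C'-B')$. None of that is supplied by the energy form of \cref{thm:bsg}.

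\textbf{The missing inequality.} What you need is the mixed-energy Cauchy--Schwarz bound. Let $r_{R+A}(c)$ be the number of $(a,b)\in R\times A$ with $a+b=c$, and let $E(R,A)=\sum_c r_{R+A}(c)^2$. Then $T(R)=\sum_{c\in A}r_{R+A}(c)\le |A|^{1/2}E(R,A)^{1/2}$. Also $E(R,A)=\sum_x r_{R-R}(x)\,r_{A-A}(x)\le E(R)^{1/2}E(A)^{1/2}\le E(R)^{1/2}|A|^{3/2}$. Together these give $T(R)\le E(R)^{1/4}|A|^{5/4}$.

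This makes the popular-difference detour unnecessary. It would also work for $R^*$, since $T(R^*)\ge T(R)/2$, but then $R^*$ buys nothing. This is the paper's argument:
\begin{itemize}
\item Apply \cref{thm:bsg} to the residual set itself as long as $E(R)>|A|^3/K$. Then $E(R)\ge |R|^3/K$, so the hypothesis of \cref{thm:bsg} holds.
\item Once $E(R)\le |A|^3/K$, property (ii) follows as $T(R)\le |A|^2/K^{1/4}$.
\item Each round removes an $\Omega(K^{-2})$-fraction of $R$, so there are $O(K^2\log|A|)$ rounds.
\item Choosing $K=\tilde\Theta(\sqrt k)$ gives at most $k$ pieces, the bound $\tilde O(|A|^2/k^{1/8})$, doubling $O(K^{28})\le k^{52}$, and time $\tilde O(K^{14}|A|)=\tilde O(k^7|A|)$, with energies estimated by sampling.
\end{itemize}

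\textbf{The exponents.} This balancing is why ``reparametrize $k$ and accept whatever constants come out'' does not prove the stated corollary, whose exponents are fixed. For instance, your threshold $r(a)\ge |A|/k^{1/8}$ combined with $T(R^*)\ge |R^*|\,|A|/k^{1/8}$ only gives $E(R^*)\ge |A|^3/\Theta(k)$. That bounds the number of pieces by $O(k^2\log|A|)$, not $k$. Merging surplus pieces is not an option, because a union of two small-doubling sets need not have small doubling.
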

\begin{proof}
We keep a working set $A^*$ that is initially $A^* := A$. In each step we call \cref{thm:bsg} on $A^*$ (with some parameter $K$ to be determined) to extract some structured subset $A_i$, remove $A_i$ from $A^*$ and continue. Once $E(A^*) \leq |A|^3 / K$, we put $R := A^*$ and stop. In each step we remove at least an $\Omega(K^{-2})$-fraction of the elements of $A^*$; thus, after $O(K^2 \log |A|)$ steps the algorithm is forced to stop. For this reason, we may pick $K \coloneqq \tilde \Theta(\sqrt{k})$. 

Property~(i) is clear by construction. For Property~(ii), a simple application of the Cauchy--Schwarz inequality suffices to bound the number of solutions $(a, b, c) \in R \times A \times A$ to $a + b = c$ by at most $E(R)^{1/4} |A|^{5/4} \leq \tilde O(|A|^2 / k^{1/8})$.
The running time follows from the running time of \cref{thm:bsg} and the fact that we can approximately compute the energy using few samples by a Chernoff bound.
\end{proof}

We also rely on the following basic inequality from additive combinatorics; see, e.g.,~\cite[Corollary~6.29]{TaoV06}.

\begin{lemma}[Plünnecke--Ruzsa] \label{lem:plunnecke-ruzsa}
Let $A \subseteq \Int$. If $|A + A| \leq K |A|$, then $|n A - m A| \leq K^{n+m} |A|$ for all integers $n, m \geq 0$.
\end{lemma}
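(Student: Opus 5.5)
The statement to prove is the Plünnecke--Ruzsa inequality: if $|A+A|\le K|A|$ then $|nA-mA|\le K^{n+m}|A|$. The plan follows Petridis' short proof. First establish a key lemma, and then iterate it together with the Ruzsa triangle inequality.

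\emph{Key lemma.} Let $X\subseteq A$ be a nonempty subset minimizing the ratio $|X+A|/|X|$, and call this minimum $K'$. Then $K'\le |A+A|/|A|\le K$. The claim is that for every finite set $C$,
\[ |X+A+C|\le K'|X+C|. \]
Prove this by induction on $|C|$.
\begin{itemize}
\item The base case $|C|=1$ is immediate.
\item For the step, write $C'=C\cup\{c\}$. Let $Y=\{x\in X: x+A+c\subseteq X+A+C\}$. Then
\[ X+A+C' = (X+A+C)\cup\big((X+A+c)\setminus(Y+A+c)\big), \]
which gives $|X+A+C'|\le K'|X+C|+K'|X|-K'|Y|$. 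Here we use minimality in the form $|Y+A|\ge K'|Y|$; when $Y=\emptyset$ this holds trivially.
\item On the other side, $X+C'=(X+C)\sqcup\big((X+c)\setminus(Z+c)\big)$ with $Z=\{x\in X: x+c\in X+C\}\subseteq Y$. Hence $|X+C'|=|X+C|+|X|-|Z|\ge |X+C|+|X|-|Y|$.
\end{itemize}
Comparing the two bounds closes the induction.

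\emph{Iterating.} Applying the lemma with $C=(h-1)A$, inductively on $h$, gives $|X+hA|\le K'^h|X|\le K^h|X|$ for all $h\ge 0$. Since the translates $x+hA\subseteq X+hA$ for fixed $x\in X$, this yields $|hA|\le K^h|X|\le K^h|A|$.

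\emph{Mixed sums and differences.} For $nA-mA$, use the Ruzsa triangle inequality $|U||V-W|\le |U+V||U+W|$, proved via the injection $(u,v-w)\mapsto(u+v,u+w)$ after fixing a representative $(v,w)$ for each difference. Take $U=X$, $V=nA$, $W=mA$ to get
\[ |X|\,|nA-mA|\le |X+nA|\,|X+mA|\le K^{n}|X|\cdot K^{m}|X|. \]
Dividing by $|X|$ gives $|nA-mA|\le K^{n+m}|X|\le K^{n+m}|A|$. The degenerate cases $n=0$ or $m=0$ are covered by the previous step, since $|{-mA}|=|mA|$.

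\emph{Main obstacle.} The delicate step is the induction in the key lemma. The set $Y$ must be defined so that the new elements are controlled by $|X|-|Y|$ from above, and simultaneously $|X+C'|$ grows by at least $|X|-|Y|$. This is exactly where the minimality of $X$ is used. Everything else is routine bookkeeping. Since the paper only invokes this as a standard fact (cf.\ the citation to Tao--Vu), a citation would also suffice.
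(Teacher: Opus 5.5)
Your proof is correct. It is Petridis' argument, and the steps check out: the cover $X+A+C'=(X+A+C)\cup\big((X+A+c)\setminus(Y+A+c)\big)$, the disjoint decomposition of $X+C'$, the inclusion $Z\subseteq Y$, the use of minimality only for subsets $Y\subseteq A$, and the injection behind the Ruzsa triangle inequality.

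The paper gives no proof of this lemma. It only quotes it from \cite[Corollary~6.29]{TaoV06}. There, the Plünnecke--Ruzsa estimates come from Plünnecke's graph-theoretic inequality for commutative layered graphs (magnification ratios, Menger's theorem), combined with Ruzsa's triangle inequality. You share that last step with the textbook route but replace the graph machinery by Petridis' key lemma.

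What your route buys is a short, elementary, self-contained proof that works verbatim in any abelian group. It also produces one nonempty $X\subseteq A$ with $|X+hA|\le K^h|X|$ for every $h$ at once, which is exactly what the step $|X|\,|nA-mA|\le|X+nA|\,|X+mA|$ needs. The graph-theoretic route is heavier but yields the more general Plünnecke inequality for commutative graphs.

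The paper only uses the case $n=m=1$, to bound $|A_i-A_i|$ in the appendix. There your triangle inequality with $U=V=W=A$ alone gives $|A|\,|A-A|\le|A+A|^2\le K^2|A|^2$, so the key lemma is not even needed for the application.
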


And finally we rely on a subroutine to efficiently compute structured sumsets. The following lemma is immediate from~\cite{BringmannFN22}.

\begin{lemma} \label{lem:sparse-conv}
There is a deterministic algorithm that, given sets $A, B, C \subseteq \set{-U, \dots, U}$, counts the number of 3SUM solutions $a + b = c$ with $a \in A, b \in B, c \in C$ in time $(|A + B| + |C|) (\log U)^{O(1)}$.
\end{lemma}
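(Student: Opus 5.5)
The statement to prove is \cref{lem:sparse-conv}: count the 3SUM solutions $a+b=c$ with $a\in A$, $b\in B$, $c\in C$ in time $(|A+B|+|C|)(\log U)^{O(1)}$. My plan is to reduce this to output-sensitive sparse convolution as in~\cite{BringmannFN22}.

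First, I would form the indicator vectors $\mathbf 1_A,\mathbf 1_B$ over $\{-U,\dots,U\}$. After shifting by $U$, their entries lie in $\{0,\dots,2U\}$. Their convolution $f=\mathbf 1_A\star\mathbf 1_B$ has support exactly $A+B$ (shifted by $2U$), and $f(s)$ counts the pairs $(a,b)\in A\times B$ with $a+b=s$.

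Second, I would invoke the deterministic sparse convolution algorithm of~\cite{BringmannFN22}. It computes the full sparse representation of $f$, i.e.\ the support together with the values, in time $\|f\|_0(\log U)^{O(1)}$, where $\|f\|_0=|A+B|$. The inputs are nonnegative, so no cancellations occur. Hence the output size is exactly $|A+B|$, and the algorithm does not need to know this size in advance.

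Third, I would sort the support of $f$, which costs $O(|A+B|\log)$, or use a deterministic dictionary. Then for each $c\in C$ I would look up $f(c)$ by binary search and sum these values. This costs $O(|C|\log U)$ and yields $\sum_{c\in C}f(c)$, which is precisely the number of solutions.

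The only real obstacle is making sure the cited result applies as stated. It must be deterministic, it must output the values and not just the support, and its polylogarithmic factors must depend only on $U$. The values of $f$ are at most $\min(|A|,|B|)\le 2U+1$, so they fit in a word; this also matters because the values are needed for the count. If the available deterministic version only handles the support, I would recover the counts by running a second sparse convolution over a small prime field, or several with CRT, restricted to the known support. That step is where I would check~\cite{BringmannFN22} carefully.
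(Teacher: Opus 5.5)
Your proposal is correct and matches the paper's proof. The paper only remarks that the lemma is immediate from the deterministic sparse nonnegative convolution algorithm of~\cite{BringmannFN22}, and your argument fills in exactly that reduction: compute $\mathbf{1}_A \star \mathbf{1}_B$ output-sensitively, values included, then sum its entries at the positions in $C$. Your fallback is unnecessary, because that algorithm already returns the values of the convolution and not just its support.
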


\begin{proof}[Proof of \cref{thm:preprocessed-subquadratic}]
The algorithm combines ideas from three different lines of work. First, suppose that the given set~$A$ contains at most $O(n^{2-\delta})$ 3SUM solutions for some constant $\delta$. We will later reduce to this case. 

The first idea is to follow the ideas of Kasliwal, Polak and Sharma~\cite{KasliwalPS25}. In the preprocessing phase we sample a uniformly random prime $p$ in the interval $[n^{3/2}, 2n^{3/2}]$. A standard argument shows that any integer triple $(a, b, c) \in A^3$ with $a + b \neq c$ satisfies $a + b \equiv c \mod p$ with probability at most $\tilde O(n^{-1.5})$. Let us call such a triple a \emph{false positive}. Then by Markov's bound, with constant probability, the total number of false positives is at most $\tilde O(n^{1.5})$. The idea is to list \emph{all} false positives in the preprocessing stage; we will describe how to implement this step efficiently in a minute. Upon query of a subset $A' \subseteq A$ we will then \emph{count} all solutions $(a, b, c) \in (A')^3$ with $a + b \equiv c \mod p$. This can be implemented by FFT in time $\tilde O(p) = \tilde O(n^{1.5})$. We then enumerate all false positives $(a, b, c)$, and subtract $1$ whenever $a, b, c \in A'$. What remains is the true number of 3SUM solutions in $A'$. We report YES if and only if that count is positive.

Second, we describe how to efficiently list all false positives. This is based on the recent $O(n^{1.9999})$-time algorithm for 3SUM~\cite{AlmanV26}. A well-known result in fine-grained complexity is that if 3SUM can be solved in time $O(n^{2-\epsilon})$, then there is also an algorithm to \emph{list} up to $O(n^{2-\delta})$ 3SUM solutions in time $\tilde O(n^{2-\epsilon\delta/2})$ (see, e.g., \cite{VassilevskaW18,DurajKPW20,FischerPS26}). In particular, we can afford to list all solutions $(a, b, c) \in A^3$ with $a + b \equiv c \mod p$. These include the at most~$O(n^{2-\delta})$ true solutions (by our initial assumption) plus at most $\tilde O(n^{1.5})$ false positives. Afterwards, we scan over the whole list and only retain the false positives. The total time is truly subquadratic, $O(n^{2-0.00005\delta})$.

Third, it remains to remove the assumption that the number of 3SUM solutions in $A$ is at most $O(n^{2-\delta})$. Let $k = \ceil{n^{\gamma}}$ and $t = \ceil{n^\beta}$ be parameters with small $\beta, \gamma > 0$ to be determined. We first partition the set $A$ by \cref{cor:bsg} into structured parts $A_1, \dots, A_k$ and an unstructured remainder $R$. For each set $A_i$, we further define a partition $A = H_i \sqcup L_i$ so that
\begin{itemize}
    \item for each $b \in H_i$, $\abs{\set{(a, c) \in A_i \times A : a + b = c}} \geq n / t$, and
    \item for each $b \in L_i$, $\abs{\set{(a, c) \in A_i \times A : a + b = c}} \leq 2n / t$.
\end{itemize}
These partitions can easily be computed in total time $\tilde O(k t n) = \tilde O(n^{1+\gamma+\beta})$ by sampling (see~\cite{FischerJX25} for a deterministic implementation).

We now follow the same approach as before with the exception that we will not list all false positives. Specifically, we distinguish three types of modular solutions $(a, b, c) \in A^3$ with $a + b \equiv c \mod p$:
\begin{enumerate}
    \item $a \in R$,
    \item $a \in A_i$ and $b \in L_i$,
    \item $a \in A_i$ and $b \in H_i$.
\end{enumerate}
\cref{cor:bsg} implies that there are at most $\tilde O(n^2 / k^{1/8}) + \tilde O(n^{1.5}) = \tilde O(n^{2-\gamma/8})$ modular solutions of type~(1). We can afford to list these exactly as before in time $O(n^{2-0.000005\gamma})$. Moreover, for each $i \in [k]$, there are at most $n \cdot 2n / t + \tilde O(n^{1.5}) = O(n^2 / t) + \tilde O(n^{1.5}) = O(n^{2-\beta})$ modular solutions of type~(2), so these can be listed in subquadratic preprocessing time $O(n^{2-0.00005\beta})$ per $i$, and therefore time $O(n^{2+\gamma-0.00005\beta})$ in total.

This only leaves the false positives of type~(3). These we cannot list explicitly. Instead, we will compute the number of false positives of type~(3) only in the query phase. That is, given $A' \subseteq A$, the goal is to compute
\begin{equation*}
    F_i = \abs{\set{(a, b, c) \in (A' \cap A_i) \times (A' \cap H_i) \times A' : a + b \neq c, a + b \equiv c \mod p}}.
\end{equation*}
In fact, we will compute $F_i$ as the difference $M_i - T_i$ defined by
\begin{align*}
    M_i &= \abs{\set{(a, b, c) \in (A' \cap A_i) \times (A' \cap H_i) \times A' : a + b \equiv c \mod p}}, \\
    T_i &= \abs{\set{(a, b, c) \in (A' \cap A_i) \times (A' \cap H_i) \times A' : a + b = c}}.
\end{align*}
The quantity $T_i$ can be evaluated by one call to \cref{lem:sparse-conv} in time $\tilde O(|A_i + H_i| + n)$. The quantity $M_i$ can be evaluated similarly, along with the usual trick to count modular solutions by computing integer solutions for the instance with and without overflow. The running time is also $\tilde O(|A_i + H_i| + n)$.

To bound the total running time, suppose (only for the sake of the analysis) that a set $C \subseteq A$ is sampled uniformly at random with rate $t / n \cdot 100 \log n$. This rate is chosen such that, with high probability, each $b \in H_i$ is involved in at least one solution $a + b = c$ with $a \in A_i$ and $c \in C$. In other words, $H_i \subseteq C - A_i$. Thus, we can bound the sumset size by
\begin{equation*}
    |A_i + H_i| \leq |A_i - A_i + C| \leq |A_i - A_i| \, |C|.
\end{equation*}
To further bound $|A_i - A_i|$ we apply the Plünnecke--Ruzsa inequality (\cref{lem:plunnecke-ruzsa}) and the fact that $A_i$ has doubling at most $\tilde O(k^{52})$. Moreover, the expected size of $C$ is $\tilde O(t)$. Thus,
\begin{equation*}
    |A_i + H_i| \leq \tilde O(k^{104} t n),
\end{equation*}
and the total time to compute the number of false positives of type~(3) can thus be bounded by
\begin{equation*}
    \tilde O\parens*{k n + \sum_{i=1}^k |A_i + H_i|} = \tilde O(k^{105} t n) = \tilde O(n^{1+105\gamma+\beta}).
\end{equation*}
All in all, we have worsened the preprocessing time by computing the partitions in time $\tilde O(n k^7 + n k t)$ and listing the modular solutions of types~(1) and~(2) in time $O(n^{2-0.000005\gamma} + n^{2+\gamma-0.00005\beta})$, and we have worsened the query time by $\tilde O(n^{1+105\gamma+\beta})$. For example, choosing $\beta=1/4$ and $\gamma=10^{-6}$ keeps the preprocessing time truly subquadratic and the query time $\tilde O(n^{1.5})$.
\end{proof}

\end{document}